\newtheorem{definition}{Definition}
\newtheorem{proposition}{Proposition}
\newtheorem{lemma}{Lemma}
\newtheorem{corollary}{Corollary}
\newcommand{\argmax}{\mathop{\mathrm{argmax}}}
\begin{document}
	
	\title{Enhanced Transmit Antenna Selection Scheme for Secure Throughput Maximization \\Without CSI at the Transmitter and its Applications on Smart Grids}
	\author{
		\IEEEauthorblockN{Hirley Alves, Mauricio Tomé, Pedro H. J. Nardelli,\\ Carlos H. M. de Lima and Matti Latva-aho} 
		\thanks{H. Alves, M. C. Tomé, P. H. J. Nardelli and M. Latva-aho are with the Centre for Wireless Communications (CWC), University of Oulu, Finland. Contact: [halves,mdecastr,nardelli,matla]@ee.oulu.fi. 
			C. H. M. de Lima is with São Paulo State University (UNESP), São João da Boa Vista, Brazil. Contact: carlos.lima@sjbv.unesp.br
			This work is partly funded by Finnish Academy (Aka) and CNPq/Brazil (n.490235/2012-3) through a joint project SUSTAIN, by SRC/Aka/BC-DC, and  by  European Commission through the P2P-SmarTest (n.646469).
		}
	}
	%
	\maketitle
	
	\begin{abstract}
		This paper addresses the establishment of secure communication links between smart-meters (Alice) and an aggregator (Bob) in the presence of an eavesdropper (Eve). The proposed scenario assumes: (i) MIMOME wiretap channel; (ii) transmit antenna selection at the Alice; (iii) no channel state information at the transmitter; (iv) fixed Wyner codes; and (v) guarantee of secure throughput by both quality of service and secrecy outage constraints. We propose a simple protocol to enhance security via transmit antenna selection, and then 
		assess its performance in closed-form by means of secrecy outage and successful transmission probabilities. 
		We assume these probabilities are our constraints and then maximize the secure throughput, establishing a security-reliability trade-off for the proposed scenario. 
		Our numerical results illustrate the effect of this trade-off on the secure throughput as well as on the number of antennas at Alice, Bob and Eve. 
		Interestingly, a small sacrifice in reliability allows secrecy enhancement in terms of secure bps/Hz. 
		We apply this idea in our smart grid application to exemplify that, although Eve may acquire some samples of the average power demand of a household, it is not enough to properly reconstruct such curve.
	\end{abstract}

	\begin{keywords}
		physical layer security, smart-grids, secure throughput, secrecy outage probability
	\end{keywords}
	
	\section{Introduction}
	
	Wireless networks have become an indispensable part of our daily life through several applications that allows us to remotely monitor and control different processes within our homes, workplaces or even modern power grids. 
	In this context, each application has its own set of requirements and performance targets, which should be considered whenever designing communication systems. For instance, smart meters sending information about energy consumption have looser reliability and latency requirements than grid control applications in the high-voltage lines \cite{ART:Kuzlu-CN-2014, ART:Fang-CST2012}.
	
	One downside of wireless systems relates to information security and secrecy as they more susceptible to eavesdropping and denial of service attacks (e.g. jamming and spoofing) than wired systems due to its own nature \cite{ART:Wang-CN-2013, ART:Lee-CM-2012,ART:Shiu-WC-2011}. 
	To cope with such issue, current security systems are mainly based on cryptographic methods employed at the upper layers of communication protocols, while assuming limited computational power at the eavesdropper \cite{ART:Lee-CM-2012,ART:Shiu-WC-2011}.

	This assumption, nonetheless, is becoming an issue nowadays since the computational power of devices are steadily growing.
	Another weak point is that cryptographic solutions often overlook the physical properties of the wireless medium, the relative locations of the network elements and the actual transmitted information \cite{ART:Lee-CM-2012,ART:Shiu-WC-2011}.
	
	Information-theoretic security at the physical layer has reemerged to cope with such issues and complement cryptography by adding reliability and confidentiality at lower layers \cite{ART:Shiu-WC-2011}. 
	Physical layer (PHY) security can also open new ways to enhance robustness and reduce the complexity of conventional cryptography as far as it is built to be unbreakable and quantifiable (in confidential bps/Hz), regardless of the eavesdropper's computational power \cite{ART:Shiu-WC-2011}. 
	The notion of PHY-security was first introduced by Shannon in his seminal work in $1949$ \cite{ART:Shannon-1949}. But it was only later, in $1975$, that Wyner proposed in \cite{ART:Wyner-1975} the wiretap channel where the eavesdropper attempts to decode the information based on a degraded version of the legitimate link signal. 
	Later in \cite{ART:LeungYanCheong-1978}, authors showed the existence of a transmission rate that guarantees confidentiality based only on the statistics of the wireless channel.
	
	In $2008$, after a long period, those initial results are extended to account for the effects of fading channels \cite{ART:Liang-TIT-2008, ART:Bloch-TIT-2008}.
	Thereafter, different established techniques in wireless systems have been analyzed, for instance multiple antenna wiretap channel is characterized in \cite{ART:Khisti-TIT-2010b}, cooperative diversity is investigated in \cite{ART:Ding-JSAC-2012, ART:Alves-SPL-2015, ART:Brante-TC-2015}, while multi antenna diversity schemes are analyzed in \cite{PROC:Alves-Globecom2011, ART:Alves-SPL-2012, ART:Wang-TWC-2014}. 
	Notwithstanding all these fundamental results and advances, most works have quite restrictive assumptions on the eavesdropper, for instance, it is common to assume some (or even full) knowledge of the channel state information (CSI) of the eavesdropper \cite{ART:Shiu-WC-2011}, which turns out to be not feasible in practice since the legitimate transmitter may not be aware of the eavesdroppers.
	Alternatively, few works consider the case where no CSI is available at the transmitters \cite{ART:Tang-TIT-2009, ART:Zhou-CL-2011, ART:Brante-TC-2015}; however perfect secrecy cannot be achieved at all times and then secrecy outage characterization is performed in order to capture the probability of having a reliable and secure transmission.  
	
	Consequently, PHY-security is neglected as a suitable option, even when the application in consideration presents the characteristics that would make such an approach viable. Some applications of the smart energy grid are good examples where PHY-security appears as an attractive solution to enhance security and confidentiality \cite{nardelli2014models}.
	PHY-security enables an enhanced secure communication network (i) within smart-homes, (ii) between smart-meters and aggregators, and (iii) between aggregators and the local (cloud-)controller; and these three levels of communications are in fact the information backbone of the modern electricity distribution grid \cite{ART:Kuzlu-CN-2014,ART:Fang-CST2012, ART:Wang-CN-2013, ART:Lee-CM-2012}.
	Besides \cite{ART:Lee-CM-2012}, which summarizes the wireless network architecture in smart grid and proposes a key establishment protocol, few works consider PHY-security in this context. 
	
	In this work we attempt to fill this gap and focus on the secure communication between smart meters and aggregators in the presence of an eavesdropper (known as Eve). We assume that the smart meter poses as a legitimate transmitter (also known as Alice), while the aggregator acts as the legitimate receiver (known as Bob). 
	Both receivers (Bob and Eve) are able to estimate their own CSI, but Alice does not possess any CSI and resorts to adaptive encoder with constant transmit rate (which can be optimally chosen). 
	We build upon \cite{PROC:Alves-Globecom2011, ART:Alves-SPL-2012}, which introduces a scheme that allows only Bob to exploit diversity from Alice's transmission and thus limiting Eve's attack by design; therefore, we assume that all nodes have multiple antennas, but Alice employs transmit antenna selection (TAS) while Bob and Eve employ maximal ratio combining (MRC). 
	Then, we characterize the secrecy outage and secure throughput. Finally, we put our results in the context of smart grids, and thus resort to actual measured data and evaluate the impact of outages in the reconstruction of the average power demand by the aggregator. 

	Our results show that the proposed scheme achieves high reliability while restricting Eve capabilities by design and therefore enhancing security. Our main contributions are summarized next:
	\begin{itemize}
		\item we extend the results in \cite{ART:Alves-SPL-2012, ART:Zhou-CL-2011} by (i) assuming multi antenna wire-tap channel (all nodes have multiple antennas), (ii) characterizing in closed-form the secrecy outage probability for the case without CSI at the transmitter; (iii) provide a secure throughput analysis;
		\item we analyze the trade-off between security and reliability by introducing a parameter that reflects the quality of service of the legitimate link; 
		\item we propose a secure throughput maximization problem, and we evaluate the respective system performance with respect to the network configuration parameters;
		\item we investigate how the trade-off between secrecy and reliability affects performance in terms of secure throughput; 
		\item we apply our results to smart grids, resorting to actual data to support and exemplify our findings; we show that even if Eve acquires some information, it will not be enough to reconstruct the power demand curve of a household. 
	\end{itemize}
	
	The rest of this paper is organized as follows: Section~\ref{sc:system_model} introduces the system model and our main assumptions, Section~\ref{sc:out_tput} presents the secure outage probability analysis focusing on the optimization problem, and illustrates how the system performance changes with respect to the configuration parameters.
	Then, Section~\ref{sc:power} addresses the secure reconstruction of the average power demand curve as a function of the outage events, while Section~\ref{sc:discussions} discuss how our results might be used in actual deployments. Section~\ref{sc:conc} draws the final remarks and concludes this paper.
	
	\noindent{\textbf{Notation:}} Hereafter we denote scalar variables by italic symbols, while vectors and matrices are denoted by lower-case and upper-case boldface symbols, respectively. Given a complex vector $\mathbf{x}$, $||\mathbf{x}||$ denotes the Euclidean norm, then $(\mathbf{x})^T$ and $(\mathbf{x})^\dag$ denote transpose and conjugate transpose operations, respectively. 
	The $m\times m$ identity matrix is represented as $\mathbf{I}_m$. Probability density function (PDF) and cumulative distribution function (CDF) of a given random variable $X$ are denoted as $f_X(x)$ and $F_X(x)$, respectively, while its expectation is denoted as $\mathbb{E}\left[\cdot\right]$.
	Gamma function is defined as $\Gamma(z)$\cite[Ch 6, \S6.1.1]{BOOK:ABRAMOWITZ-DOVER03}, and the regularized lower incomplete gamma function is denoted as $\operatorname{P}\left(s , z\right) = \tfrac{\gamma(s,z)}{\Gamma(z)}$ \cite[Ch 6, \S6.5.1]{BOOK:ABRAMOWITZ-DOVER03}.
	
	\section{System model} \label{sc:system_model}
	
	We assume multiple antenna wiretap channel where a legitimate pair attempts to communicate securely in the presence of an eavesdropper. 
	The transmitter is known as Alice and represents a smart meter and possesses $N_A$ antennas, while the $N_B$ antenna receiver is known as Bob and plays as an aggregator (responsible for acquiring information from smart meters and performing control and management actions \cite{ART:Strbac-EP2008}). 
	\begin{figure}[!b]
		\centering
		\includegraphics[width=.6\columnwidth]{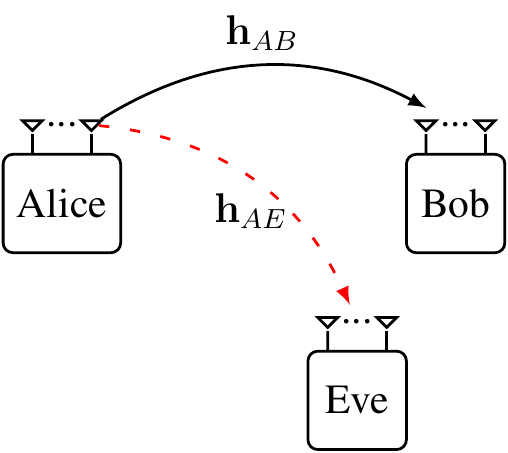}
		\caption{Network deployment illustration: Alice employs TAS, while Bob and Eve resort to MRC, but only Bob is able to exploit diversity from Alice's antennas. An error-free open channel is assumed between Bob and Alice, so that Bob can enable the Alice's transmission and inform the best antenna index.}
		\label{fig:system_model}
	\end{figure}
	The untrusted node, commonly named as Eve, is assumed to have $N_E$ antennas. Eve may eavesdrop and attempt to acquire data from Alice's transmissions. 
	Herein, we assume that Alice sends its average power demand to the aggregator, which by its turn reconstructs this signal in order to perform control and power demand management of its grid. 
	
	This scenario is depicted in Fig.~\ref{fig:system_model}, where the solid black line represents the communication between Alice and Bob, while eavesdropper link is depicted in as a dashed red arrow. 
	Moreover, both receivers are able to estimate their own CSI, but no CSI is fed back to Alice.
	However, there is a open and error-free feedback channel between Bob and Alice which is used to convey the index of Alice's antenna with the best signal-to-noise-ratio (SNR) and enable on-off transmission.

	As in \cite{ART:Alves-SPL-2012, ART:Brante-TC-2015}, such channel is open and error-free, and even if Eve acquires this feedback and knows the antenna index an optimum TAS scheme with respect to Bob is a random TAS scheme concerning Eve. Therefore, Eve is not able to exploit diversity from Alice's multiple antennas since legitimate and eavesdropper channels are uncorrelated.
	Another advantage of this approach is that the feedback channel can have limited capacity, and the number of bits necessary for this channel is $n_{bits}=\lceil \log_2 N_A \rceil$.

	\subsection{Transmission protocol and encoding scheme}
	
	The aggregator schedules and requests each smart meter to send its average power demand. Such a request is performed though the feedback channel, which not only carries the signaling to start the transmission but also the antenna index. 
	Since no CSI is fed back to Alice, we resort to fixed Wyner codes, with constant transmission rate, which can be optimally chosen given the network configuration parameters as we shall see in the next section.
	
	Let us first define the capacity of the legitimate and eavesdropper links as $C_b$ and $C_e$, respectively. Then, Bob chooses two rates: a transmission rate $R_b$ and a confidential rate $R_s$, and we define the cost of securing a transmission as $R_e = R_b-R_s$ \cite{ART:Tang-TIT-2009,ART:Zhou-CL-2011}. 
	Then, two conditions arise in order to guarantee secrecy and reliability: (i)
	whenever $C_b > R_b$ the message is correctly decoded at Bob; and (ii) an information leakage occurs whenever $C_e > R_e$ \cite{ART:Tang-TIT-2009,ART:Zhou-CL-2011}. 
	These conditions guarantee that there is a Wyner code that ensures a reliable (small error probabilities) and secure communication link. Further details of fixed Wyner codes and code construction can be found in \cite{ART:Tang-TIT-2009}. 
	Furthermore, in this context we adopt a probabilistic measure of security, namely \textit{secrecy outage probability}, and then we are able to characterize the secure throughput maximization problem analyzed in Section~\ref{sc:out_tput}. 
	
	\subsection{Legitimate and Eavesdropper Channel models}
	
	We assume that all channels coefficients are independent and the squared-envelope is exponentially distributed, thus we consider Rayleigh fading. 
	In the legitimate channel, a single transmit antenna is selected at Alice to maximize the SNR at Bob, which applies MRC at the received signal. The best antenna index is defined as $i^*$: 
	\begin{align}
	i^* = \argmax_{1 \leq~i~\leq N_A} ||\mathbf{h}_{iB}||,
	\end{align}
	where $\mathbf{h}_{iB} = \left[{h}_{i1}, {h}_{i2},\cdots,{h}_{i{N_B}} \right]^T$ denotes the $N_B \times 1$ channel vector between the $i$th transmit antenna at Alice and the $N_B$ antennas at Bob with independent and identically distributed (i.i.d.) Rayleigh fading. 
	
	Then, Alice encodes the message with the codeword $\mathbf{x}=\left[x(1,\cdots,x(i),\cdots,x(n))\right]$, using the aforementioned Wyner codes \cite{ART:Tang-TIT-2009}.
	We also assume that the codeword transmitted is subject to an average power constraint $\tfrac{1}{n}\sum_{i=1}^{n} \mathbb{E}\left[ |x(i)|^2 \right] \leq P_A$, where $P_A$ denotes Alice's transmit power. Then, Bob combines the signal vectors using MRC, which yields the following received signal at time $i$:
	\begin{align}
	\label{eq:yB}
	y_B(i) = \mathbf{h}_{AB}^\dag \mathbf{h}_{AB} x(i) + \mathbf{h}_{AB}^\dag \mathbf{n}_{AB},
	\end{align}
	where $\mathbf{h}_{AB} = \mathbf{h}_{{i^*}B}$ represents the legitimate channel vector, $\mathbf{n}_{AB}$ is the $N_B \times 1$ additive white Gaussian noise vector at Bob, assuming $\mathbb{E}\left[ \mathbf{n}_{AB}\mathbf{n}_{AB}^\dag \right] = \mathbf{I}_{N_B} \sigma_{AB}^2$, with $\sigma_{AB}^2$ being the noise variance at each antenna. 
	Thus, from \eqref{eq:yB} the instantaneous SNR of the legitimate link is 
	\begin{align}
	\gamma_B = \frac{||\mathbf{h}_{AB}||^2 P_A}{\sigma_{AB}^2},
	\end{align}
	and its PDF and CDF are defined, respectively, as \cite{ART:Chen-TWC-2009}
	\begin{align}
	\label{eq:pdf_gammaB}
	f_{\gamma_B}(\gamma) &= \frac{N_A \ \gamma^{N_B-1}}{\Gamma(N_B) \ \overline{\gamma}_B^{N_B}} \exp\left(-\frac{\gamma}{\overline{\gamma}_B}\right) \, \operatorname{P}\left(N_B \,,\, \frac{\gamma}{\overline{\gamma}_B}\right)^{N_A-1}, \\
	\label{eq:cdf_gammaB}
	F_{\gamma_B}(\gamma) &= \operatorname{P}\left( N_B \,,\, \frac{\gamma}{\overline{\gamma}_B} \right)^{N_A},
	\end{align}
	where $\overline{\gamma}_B$ denotes the average SNR and we recall that $\operatorname{P}\left(\cdot , \cdot \right)$ denotes the regularized lower incomplete gamma function \cite[Ch 6, \S6.5.1]{BOOK:ABRAMOWITZ-DOVER03}
	Notice from \eqref{eq:pdf_gammaB} and \eqref{eq:cdf_gammaB} that the legitimate channel exploits diversity from Alice and Bob's multiple antennas. 
	
	On the other hand, Eve perceives a random TAS scheme, thus can only exploit diversity from its own antennas. 
	Therefore, Eve combines the eavesdropped signal vectors using MRC, which yields the following received signal at time $i$
	\begin{align}
	\label{eq:yE}
	y_E(i) = \mathbf{h}_{AE}^\dag \mathbf{h}_{AE} x(i) + \mathbf{h}_{AE}^\dag \mathbf{n}_{AE},
	\end{align}
	where $\mathbf{h}_{AE} = \mathbf{h}_{{i^*}B}$ represents the eavesdropper channel vector, $\mathbf{n}_{AE}$ is the $N_E \times 1$ additive white Gaussian noise vector at Eve, assuming $\mathbb{E}\left[ \mathbf{n}_{AE}\mathbf{n}_{AE}^\dag \right] = \mathbf{I}_{N_E} \sigma_{AE}^2$, with $\sigma_{AE}^2$ being the noise variance at each antenna. 
	Similarly to the legitimate link, all channels undergo Rayleigh fading. In this context, we write the instantaneous SNR at Eve as $\gamma_E = \tfrac{||\mathbf{h}_{AE}||^2 P_A}{\sigma_{AE}^2}$, which follows Gamma distribution, and its PDF and CDF are given receptively as \cite{ART:Alves-SPL-2012}
	\begin{align}
	\label{eq:pdf_gammaE}
	f_{\gamma_E}(\gamma) &= \frac{\gamma^{N_E-1}}{\Gamma(N_E)\, \overline{\gamma}_E^{N_E}} \exp\left( - \frac{\gamma}{\overline{\gamma}_E}  \right), \\
	\label{eq:cdf_gammaE}
	F_{\gamma_E}(\gamma) &= \operatorname{P}\left(N_E \,,\, \frac{\gamma}{\overline{\gamma}_E}\right), 
	\end{align}
	where $\overline{\gamma}_E$ denotes the average SNR at Eve.

	\section{Secrecy Outage and Secure Throughput} \label{sc:out_tput}
	\begin{figure*}[!t]
		\centering
		\includegraphics[width=\columnwidth]{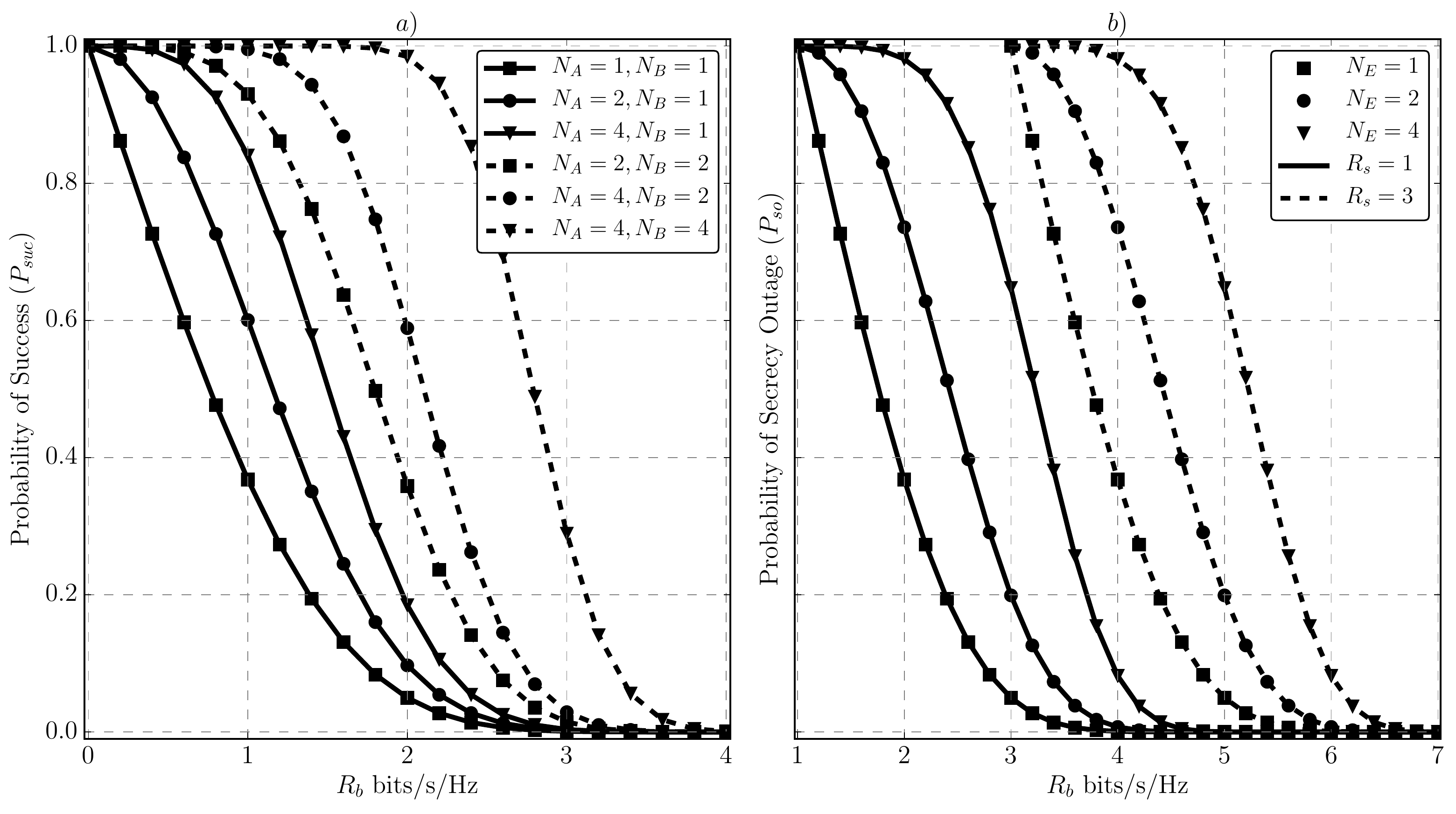}
		\caption{Example of the success and secrecy outage probabilities vs. the transmission rate $R_b$: $a)$ distinct antenna arrangements at the legitimate link with $\overline{\gamma}_B = 10$~dB; $b)$ secrecy outage for different number of antennas and for two secure rates $R_s \in \{1,3\}~\mathrm{bits/s/Hz}$ with $\overline{\gamma}_E = 0$~dB}
		\label{fig:psuc_pso}
		\vspace{-3ex}
	\end{figure*}
	As discussed above there are two conditions so as to guarantee secrecy and reliability \cite{ART:Tang-TIT-2009,ART:Zhou-CL-2011}. With respect to the former, the channel capacity has to be greater than the transmission rate, thus $C_b > R_b$ which ensures that the message is decoded. Therefore, we define the probability of successful transmission for the proposed scheme in the following lemma.
	\begin{lemma}
		\label{lm:psuc}
		The probability of successful transmission for the system model of Section~\ref{sc:system_model} assuming that an on-off transmission scheme, which occurs whenever $\gamma_B$ exceeds an SNR threshold $\mu$, is $p_{suc} = \Pr\left[ C_b> R_b \right] = \Pr\left[ \gamma_B > \mu \right] = 1-F_{\gamma_B}(\mu)$, where $F_{\gamma_B}(\cdot)$ is given in \eqref{eq:cdf_gammaB} and $\mu\geq 2^{R_b}-1$, which reflects the minimum value that guarantees reliability at the legitimate link. 
	\end{lemma}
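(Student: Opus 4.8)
The plan is to proceed directly from the reliability condition introduced in the transmission protocol, namely that Bob decodes correctly whenever $C_b > R_b$. First I would invoke the Shannon capacity of the legitimate link, which for the AWGN channel with instantaneous SNR $\gamma_B$ reads $C_b = \log_2(1+\gamma_B)$. Substituting this into the decoding condition and solving for $\gamma_B$ immediately gives $\gamma_B > 2^{R_b}-1$, so that the event $\{C_b > R_b\}$ is identical to the event $\{\gamma_B > 2^{R_b}-1\}$, since $\gamma_B \mapsto \log_2(1+\gamma_B)$ is strictly increasing and the inequality direction is preserved upon inversion.

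Next I would connect this to the on-off transmission rule. Because Alice transmits only when $\gamma_B$ exceeds the design threshold $\mu$, reliability is guaranteed precisely when $\mu$ is chosen no smaller than the critical value $2^{R_b}-1$: any $\mu \geq 2^{R_b}-1$ ensures that every attempted transmission satisfies $C_b > R_b$, with $\mu = 2^{R_b}-1$ being the minimal (least conservative) admissible choice. This justifies the constraint $\mu \geq 2^{R_b}-1$ appearing in the statement, and it lets me write $p_{suc} = \Pr[\gamma_B > \mu]$.

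Finally I would express this tail probability as the complement of the CDF, $\Pr[\gamma_B > \mu] = 1 - \Pr[\gamma_B \leq \mu] = 1 - F_{\gamma_B}(\mu)$, and substitute the closed-form CDF of $\gamma_B$ from \eqref{eq:cdf_gammaB} to obtain $p_{suc} = 1 - \operatorname{P}\!\left(N_B,\, \mu/\overline{\gamma}_B\right)^{N_A}$, which exhibits the diversity order $N_A N_B$ inherited from the TAS-MRC legitimate link.

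The argument is essentially a chain of equivalences, so I do not anticipate a genuine obstacle. The only point meriting care is the conceptual one of treating $\mu$ as a free design parameter bounded below by $2^{R_b}-1$ rather than fixing it at equality; keeping $\mu$ and $R_b$ decoupled (subject to the stated bound) is precisely what makes the security-reliability trade-off and the subsequent secure-throughput optimization in Section~\ref{sc:out_tput} nontrivial.
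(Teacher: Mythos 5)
Your argument is correct and fills in exactly the reasoning the paper leaves implicit (the paper states Lemma~\ref{lm:psuc} without a proof, treating the chain $C_b>R_b \Leftrightarrow \gamma_B>2^{R_b}-1$, the on-off threshold $\mu\geq 2^{R_b}-1$, and the complement-of-CDF step as immediate). Your handling of the subtle point---that $\Pr[C_b>R_b]$ and $\Pr[\gamma_B>\mu]$ coincide as the success probability only because the on-off rule gates transmission at $\mu\geq 2^{R_b}-1$, with equality being the least conservative choice later exploited in Proposition~\ref{prop:rate_b}---is consistent with how the paper uses the lemma.
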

	
	On the other hand, regarding security, an information leakage occurs whenever $C_e > R_e$, where $R_e = R_b - R_s$, and thus we have secrecy outage which can be defined as follows.
	\begin{lemma}
		\label{lm:pso}
		Given the system model of Section~\ref{sc:system_model} and fixed Wyner codes, the probability of secrecy outage is $p_{so} = \Pr\left[ C_e > R_b-Rs \right] = \Pr\left[ \gamma_E > 2^{R_b-R_s} -1 \right] = 1 - F_{\gamma_E}(2^{R_b-R_s} -1)$, where $F_{\gamma_E}(\cdot)$ is given in \eqref{eq:cdf_gammaE}.
	\end{lemma}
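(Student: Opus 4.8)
The plan is to mirror the structure used in Lemma~\ref{lm:psuc}, but now applied to the eavesdropper link and the security condition rather than the reliability condition. The starting point is the information-leakage criterion established in Section~\ref{sc:system_model}: a secrecy outage occurs precisely when Eve's channel capacity exceeds the rate redundancy $R_e = R_b - R_s$, i.e.\ the event $\{C_e > R_e\}$. By definition, then, $p_{so} = \Pr[C_e > R_b - R_s]$, which fixes the first equality in the statement.

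Next I would invoke the Shannon capacity expression for Eve's link. Since Eve applies MRC over its $N_E$ antennas and observes the instantaneous SNR $\gamma_E$ associated with \eqref{eq:yE}, its capacity is $C_e = \log_2(1 + \gamma_E)$. Because $x \mapsto \log_2(1+x)$ is strictly increasing, the event $\{C_e > R_b - R_s\}$ is equivalent to $\{\log_2(1+\gamma_E) > R_b - R_s\}$, and inverting the logarithm gives the equivalent event $\{\gamma_E > 2^{R_b - R_s} - 1\}$. This chain of equivalences yields the second equality in the statement.

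Finally I would translate this tail event into the complementary CDF of $\gamma_E$. Writing $\Pr[\gamma_E > t] = 1 - F_{\gamma_E}(t)$ and substituting the threshold $t = 2^{R_b - R_s} - 1$ produces $p_{so} = 1 - F_{\gamma_E}(2^{R_b - R_s} - 1)$, with $F_{\gamma_E}$ given explicitly in \eqref{eq:cdf_gammaE}. This completes the derivation.

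As for the main obstacle: the argument is essentially elementary, so there is little difficulty beyond bookkeeping. The only point requiring any care is the monotone inversion of the capacity--SNR map, which relies on $\gamma_E$ being a valid nonnegative SNR so that the threshold $2^{R_b-R_s}-1$ is the unique preimage of $R_b-R_s$. I expect no genuine analytical hurdle, since the law of $\gamma_E$ is already characterized in \eqref{eq:pdf_gammaE}--\eqref{eq:cdf_gammaE}; the lemma is in effect a direct restatement of the leakage condition in terms of the known Gamma CDF.
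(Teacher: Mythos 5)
Your derivation is correct and coincides with the paper's reasoning: the authors state Lemma~\ref{lm:pso} as an immediate consequence of the leakage condition $C_e > R_e = R_b - R_s$, the capacity map $C_e = \log_2(1+\gamma_E)$, and the complementary CDF in \eqref{eq:cdf_gammaE}, exactly the chain you spell out. No discrepancy to report.
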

	
	Let us introduce an example of Lemmas \ref{lm:psuc} and \ref{lm:pso}. Fig.~\ref{fig:psuc_pso} illustrates the performance of the success probability  ($p_{suc}$) and secrecy outage probability  ($p_{so}$) as a function of the  transmission rate $R_b$. 
	As expected, the performance improves by increasing the number of antennas either at the legitimate link or at the Eve. However, Eve can only change its own diversity, and thus outage probability (we recall that higher secrecy outage, means that more information is acquired by the eavesdropper), by adding more antennas, since it cannot exploit diversity from Alice's antennas. In its turn, the legitimate channel performance enhances even more if the aggregator dedicates more antennas to reception. Additionally, notice that we assume a multiple antenna scenario, encompassing the single antenna case introduced in \cite{ART:Zhou-CL-2011}. 
	
	After presenting Lemmas \ref{lm:psuc} and \ref{lm:pso}, we are able to define the secure throughput and the maximization problem.
	\begin{definition}[Secure throughput] \label{df:tput}
		The	secure throughput $T_s$ of the legitimate link (between smart meter and aggregator) is defined as 
		\begin{align}\label{eq:tput}
		T_s \stackrel{\small \triangle}{=} R_s \, p_{suc} = R_s \, \left(1 - \operatorname{P}\left( N_B \,,\, \frac{\mu}{\overline{\gamma}_B} \right)^{N_A} \right).
		\end{align}
	\end{definition}
	
	With respect to Alice, our goal is to determine the best transmission rate that ensures both reliability and secrecy.
	This thus maximizes the secure throughput to Bob, while respecting secrecy outage ($p_{so}(R_b, R_s) \leq \epsilon$) and QoS constrains ($p_{suc}(\mu) \geq \sigma$). 
	Then, we can define the following maximization problem as
	\vspace{-1ex}
	\begin{align}\label{eq:opt_problem1}
	\begin{aligned}
	& \argmax_{R_s, R_b, \mu}
	& & T_s \\
	& \mathrm{subject~to}
	& & p_{so}(R_b, R_s) \leq \epsilon \\
	& & & p_{suc}(\mu) \geq \sigma \\
	& & & \mu \geq 2^{R_b}-1 \\
	& & & R_s > 0,
	\end{aligned}
	\vspace{-2ex}
	\end{align}
	where $0\leq \sigma \leq 1$ is the minimum acceptable success probability, reflecting the QoS constraint on the legitimate channel, and $0 \leq \epsilon \leq 1$ is the maximum acceptable information leakage. 

	Note that Alice is aware that eavesdropping may occur, and thus protects its transmission by optimally selecting a proper rate while minimizing the secrecy outage. We will further discuss the impact of these assumptions in Section~\ref{sc:power}. 
	From \eqref{eq:opt_problem1} and Lemma~\ref{lm:pso}, we can see that $p_{so}(R_b, R_s)$ is independent of $\mu$. Thereby, we first maximize $p_{suc}(\mu)$, which is monotonically decreasing with respect to $\mu$, by minimizing $\mu$. Hence, its optimal value is $\mu=2^{R_b}-1$. 
	\vspace{-1ex}
	\begin{proposition}
		\label{prop:rate_b}
		Assuming optimal $\mu=2^{R_b}-1$, the transmission rate $R_b$ that ensures $p_{suc}(\mu) \geq \sigma$ is
		\vspace{-1ex} 
		\begin{align}\label{eq:R_B}
		R_b \leq \log_2 \left(1+ \overline{\gamma}_B \, \alpha \, \log\left( \left( 1- (1-\sigma)^{\tfrac{1}{N_A \, N_B}}\right)^{-1} \right) \right),
		\end{align}
		\vspace{-1ex}
		where $\alpha = \Gamma(N_B+1)^{\tfrac{1}{N_B}}$.
	\end{proposition}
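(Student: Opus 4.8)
The plan is to invert the QoS constraint $p_{suc}(\mu)\geq\sigma$ after inserting the already-established optimal threshold $\mu = 2^{R_b}-1$. First I would use Lemma~\ref{lm:psuc} together with the CDF in \eqref{eq:cdf_gammaB} to write the success probability as $p_{suc}(\mu) = 1 - \operatorname{P}(N_B,\mu/\overline{\gamma}_B)^{N_A}$. Because $\operatorname{P}$ is increasing in its second argument, $p_{suc}$ is monotonically decreasing in $\mu$; hence the constraint is active as an upper bound on $\mu$ and can be rearranged to $\operatorname{P}(N_B,\mu/\overline{\gamma}_B)\leq (1-\sigma)^{1/N_A}$.

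The main obstacle is that the regularized lower incomplete gamma function admits no elementary inverse, so the closed form cannot come from an exact inversion. To get around this I would replace $\operatorname{P}(N_B,z)$ by the approximation $\operatorname{P}(N_B,z)\approx (1-e^{-z/\alpha})^{N_B}$ with $\alpha = \Gamma(N_B+1)^{1/N_B}$. The reason this particular $\alpha$ is the right choice is that it makes the approximation asymptotically exact at both ends of the range: as $z\to 0$ both sides behave like $z^{N_B}/\Gamma(N_B+1)$ (the leading term of the series for $\operatorname{P}$, using $\alpha^{N_B}=\Gamma(N_B+1)$), and as $z\to\infty$ both tend to $1$. I would record these two limits as the justification for the subsequent closed-form step.

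With the approximation in place, and writing $z = \mu/\overline{\gamma}_B$, the constraint reads $(1-e^{-z/\alpha})^{N_B}\leq (1-\sigma)^{1/N_A}$. Taking the $N_B$-th root gives $1-e^{-z/\alpha}\leq (1-\sigma)^{1/(N_A N_B)}$, isolating the exponential yields $e^{-z/\alpha}\geq 1-(1-\sigma)^{1/(N_A N_B)}$, and taking logarithms (both sides positive, with the sign flip handled by monotonicity) delivers $z\leq \alpha\log((1-(1-\sigma)^{1/(N_A N_B)})^{-1})$.

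Finally I would undo the substitutions $z=\mu/\overline{\gamma}_B$ and $\mu=2^{R_b}-1$, obtaining $2^{R_b}-1\leq \overline{\gamma}_B\,\alpha\,\log((1-(1-\sigma)^{1/(N_A N_B)})^{-1})$, and solving for $R_b$ reproduces exactly \eqref{eq:R_B}. The only genuinely delicate point is the approximation of $\operatorname{P}$; every subsequent operation is monotone and invertible, so no spurious branches are introduced and the inequality direction is preserved throughout.
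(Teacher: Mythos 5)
Your derivation follows essentially the same route as the paper's: the same rearrangement of $p_{suc}(\mu)\geq\sigma$ into $\operatorname{P}\left(N_B,\mu/\overline{\gamma}_B\right)\leq(1-\sigma)^{1/N_A}$, the same surrogate $\left(1-e^{-z/\alpha}\right)^{N_B}$ with $\alpha=\Gamma(N_B+1)^{1/N_B}$, and the same monotone algebra down to \eqref{eq:R_B}. The one substantive difference is how that surrogate is justified. The paper invokes a one-sided inequality between $\left(1-e^{-z/\alpha}\right)^{N_B}$ and $\operatorname{P}(N_B,z)$ from the NIST handbook \cite[Ch8, \S8.10.11]{BOOK:Olver-2010}, so the replacement is a bound with a definite direction; you instead argue only that the two expressions agree asymptotically as $z\to 0$ and $z\to\infty$. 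Your asymptotic argument correctly identifies why $\alpha=\Gamma(N_B+1)^{1/N_B}$ is the right constant, but matching at the two endpoints does not control the sign of the discrepancy at intermediate $z$, so on its own it cannot certify that an $R_b$ satisfying \eqref{eq:R_B} actually \emph{ensures} $p_{suc}\geq\sigma$; it only shows the closed form is approximately equivalent to the exact (non-invertible) condition. To close this you need the one-sided bound, and its direction matters: if $\left(1-e^{-z/\alpha}\right)^{N_B}$ lower-bounds $\operatorname{P}(N_B,z)$ for $N_B\geq 1$ (the direction as written in the paper's step $(b)$), the chain of implications makes \eqref{eq:R_B} a necessary rather than a sufficient condition on $R_b$; only the opposite direction yields a guarantee. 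This subtlety is glossed over in the paper's own proof as well, but your ``approximation'' framing leaves it entirely open, which is the one genuine gap in an otherwise faithful reconstruction.
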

	\begin{proof}
		From Lemma~\ref{lm:psuc} we have $p_{suc}(\mu)$ and then under the constraint $\sigma$, we attain
		\begin{align}
		1-\operatorname{P}\left( N_B \,,\, \tfrac{\mu}{\overline{\gamma}_B} \right)^{N_A} &\geq \sigma \\
		\operatorname{P}\left( N_B \,,\, \tfrac{\mu}{\overline{\gamma}_B} \right) & \stackrel{(a)}{\leq} (1-\sigma)^{\tfrac{1}{N_A}} \\
		\left(1- \exp\left( -\dfrac{\mu}{\overline{\gamma}_B \, \alpha} \right) \right)^{N_B} &\stackrel{(b)}{\leq} (1-\sigma)^{\tfrac{1}{N_A}} \\
		\mu &\stackrel{(c)}{\leq} \overline{\gamma}_B \, \alpha \, \log\left( \xi^{-1} \right),
		\end{align}
		\begin{enumerate}[label=$(\alph*)$]
			\item since $0 \leq \sigma \leq 1$ we isolate the regularized gamma function, which is invertible only for the equality, thus $\mu = \overline{\gamma}_B \operatorname{P}^{-1}\left( N_B \,,\, (1-\sigma)^{\tfrac{1}{N_A}} \right)$, where $\operatorname{P}^{-1} (a,x)$ is the inverse of the generalized regularized incomplete gamma function defined in \cite{URL:Wolfram-InvGamma, URL:scipy-invgamma}; otherwise,
			\item since $N_B>0$ $\tfrac{\mu}{\overline{\gamma}_B}>0$, we rewrite $(a)$ by resorting to the following inequality $(1-\exp(-\alpha_a x))^a \leq \operatorname{P}(a, x)$, where $\alpha_a = \Gamma(1+a)^{1/a}$ (equality holds for $a=1$) \cite[Ch8, \S8.10.11]{BOOK:Olver-2010},
			\item last, since all variables are positive we isolate the variable $\mu$, where $\xi = \left( 1- (1-\sigma)^{\tfrac{1}{N_A \, N_B}}\right)$
		\end{enumerate}
		Finally, we know that $\mu=2^{R_b}-1$, thus we readily attain \eqref{eq:R_B}. 
	\end{proof} 
	\begin{corollary}
		Assuming $N_A \in \mathbb{Z}^*$ and $N_B = 1$, which is the case when only TAS is employed at the legitimate channel, then $R_b \leq \log_2 \left(1+ \overline{\gamma}_B \, \log\left( \left( 1- (1-\sigma)^{\frac{1}{N_A}}\right)^{-1} \right) \right)$. While for $N_A=N_B=1$ (single antenna case), \eqref{eq:R_B} reduces to $R_b \leq \log_2 \left(1+ \overline{\gamma}_B \, \log\left( \sigma^{-1} \right) \right)$ as in \cite{ART:Zhou-CL-2011}. 
	\end{corollary}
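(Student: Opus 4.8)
The plan is to derive both special cases by direct specialization of the general bound in Proposition~\ref{prop:rate_b}, the key observation being that the Gamma-function prefactor $\alpha$ collapses to unity when $N_B=1$. First I would evaluate $\alpha = \Gamma(N_B+1)^{1/N_B}$ at $N_B=1$, which gives $\alpha = \Gamma(2)^{1} = 1$. Substituting $N_B=1$ and $\alpha=1$ into \eqref{eq:R_B} immediately yields
\begin{align}
R_b \leq \log_2 \left(1+ \overline{\gamma}_B \, \log\left( \left( 1- (1-\sigma)^{\frac{1}{N_A}}\right)^{-1} \right) \right),
\end{align}
which is the first claimed expression, valid for any $N_A \in \mathbb{Z}^*$.

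Next, to obtain the single-antenna case I would further set $N_A=1$ in the expression just derived. The exponent $\tfrac{1}{N_A}$ becomes $1$, so the inner term simplifies as $1-(1-\sigma)^{1} = 1-(1-\sigma) = \sigma$, whence $R_b \leq \log_2\left(1+\overline{\gamma}_B\,\log(\sigma^{-1})\right)$, recovering the result of \cite{ART:Zhou-CL-2011}. Both reductions are thus pure substitution and algebraic simplification.

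The only subtlety worth flagging is that step $(b)$ in the proof of Proposition~\ref{prop:rate_b} invokes the inequality $(1-\exp(-\alpha_a x))^a \leq \operatorname{P}(a,x)$, which holds with equality precisely when $a=1$. Since $a$ plays the role of $N_B$ here, setting $N_B=1$ turns that inequality into an exact identity; consequently the bound in this corollary is not merely an upper estimate but is tight, coinciding with the exact characterization one would obtain by inverting the regularized gamma function directly via branch $(a)$. No genuine obstacle arises — the argument is essentially bookkeeping of the special-case values — so I would present it as a short substitution rather than a structured multi-step derivation.
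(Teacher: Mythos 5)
Your proposal is correct and matches what the paper intends: the corollary follows from Proposition~\ref{prop:rate_b} by direct substitution of $N_B=1$ (so $\alpha=\Gamma(2)=1$) and then $N_A=1$, exactly as you compute. Your added observation that the bound becomes tight for $N_B=1$ because the inequality in step $(b)$ holds with equality when $a=1$ is a correct and worthwhile refinement that the paper does not state explicitly.
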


	\begin{figure*}[!t]
		\centering
		\includegraphics[width=\columnwidth]{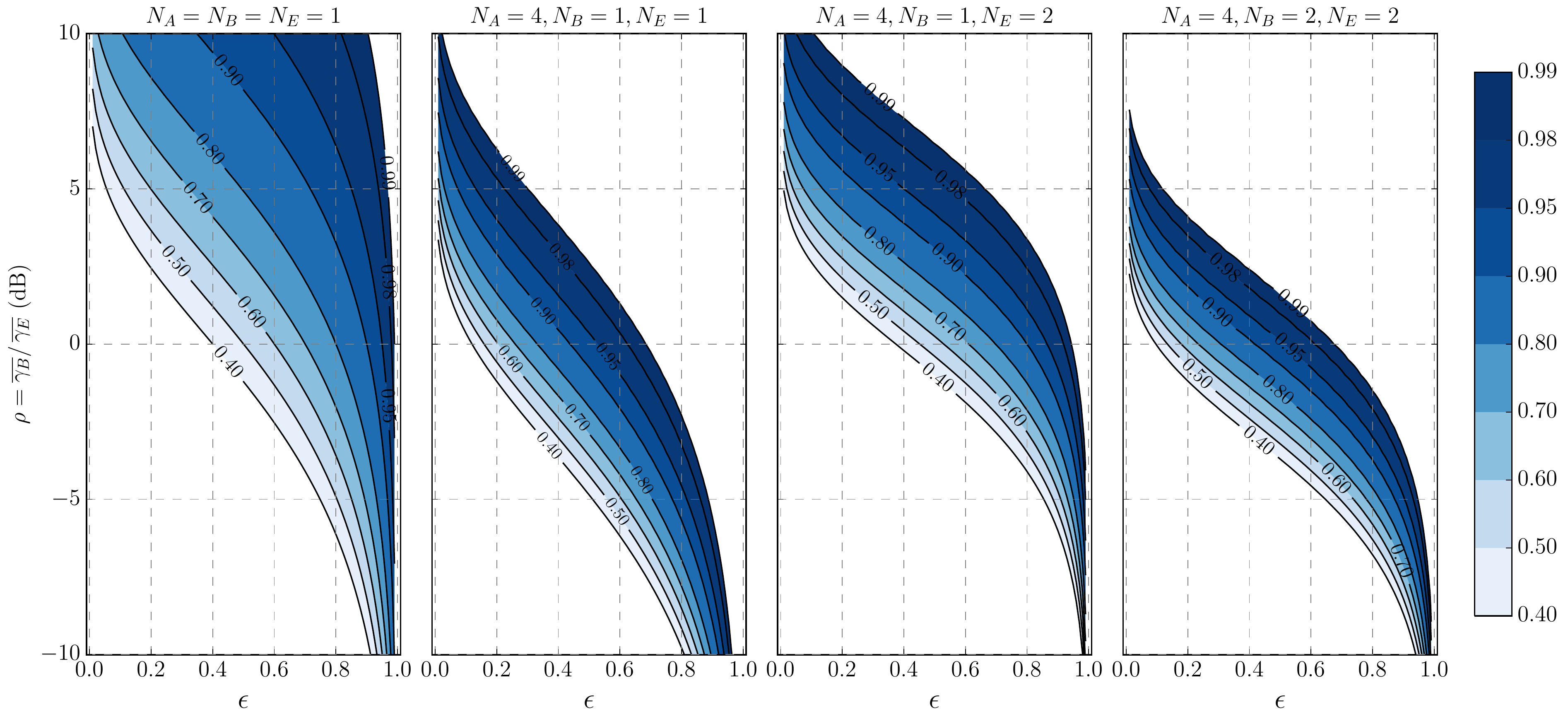}
		\caption{Illustrative example of the trade-off between reliability and security for distinct sets of antennas, from the single (left) to multiple (right) antennas. Contour plots indicate the value of $.4 \leq \sigma \leq 0.99$ as a function of the relative gain between the average SNR of the legitimate and eavesdropper channels, namely $\rho = \tfrac{\overline{\gamma}_B}{\overline{\gamma}_E}$, and $\epsilon$.}
		\label{fig:sigma_epsilon}
		\vspace{-3ex}
	\end{figure*}

	Next, we tackle the restriction on the information leakage $p_{so}(R_b, R_s) \leq \epsilon$. 
	\begin{proposition}
		\label{prop:r_s}
		For any $R_b>R_s$ the secrecy outage is monotonically decreasing with $R_b$, while monotonically increasing with respect to $R_s$. Thus, satisfying $p_{so}(R_b, R_s) \leq \epsilon$, the throughput maximizing $R_s$ is
		\begin{align}
		\label{eq:R_S}
		R_s &= R_b - \log_2\left(1+\overline{\gamma}_E \operatorname{P}^{-1}\left( N_E \,,\, 1-\epsilon \right) \right),
		\end{align}
		where $\operatorname{P}^{-1} (a,x)$ is the inverse of the generalized regularized incomplete gamma function \cite{URL:Wolfram-InvGamma}\footnote{It is noteworthy that $\operatorname{P}^{-1} (a,x)$  is an analytic function of $a$ and $x$ and can be easily evaluated through standard mathematical frameworks such as Mathematica \cite{URL:Wolfram-InvGamma} as well as SciPy \cite{URL:scipy-invgamma}.}.
	\end{proposition}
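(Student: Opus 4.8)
The plan is to start from the closed-form secrecy outage supplied by Lemma~\ref{lm:pso} together with \eqref{eq:cdf_gammaE}, namely
\begin{align}
p_{so}(R_b,R_s) = 1 - \operatorname{P}\left( N_E \,,\, \frac{2^{R_b-R_s}-1}{\overline{\gamma}_E} \right),
\end{align}
and to exploit the fact that $p_{so}$ depends on the pair $(R_b,R_s)$ only through the difference $R_e = R_b - R_s$.

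First I would establish the two claimed monotonicities. The regularized lower incomplete gamma function $\operatorname{P}(N_E,\cdot)$ is a CDF in its second argument, hence strictly increasing there, its derivative being the strictly positive Gamma density of \eqref{eq:pdf_gammaE}. Composing with the strictly increasing map $R_e \mapsto 2^{R_e}-1$ shows that $\operatorname{P}\left(N_E,(2^{R_e}-1)/\overline{\gamma}_E\right)$ is strictly increasing in $R_e$, so $p_{so}=1-\operatorname{P}(\cdot)$ is strictly decreasing in $R_e$. Since $R_e$ grows with $R_b$ and shrinks with $R_s$, this immediately yields that $p_{so}$ is monotonically decreasing in $R_b$ and monotonically increasing in $R_s$, as asserted.

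Next I would argue that the leakage constraint binds at the optimum. By Definition~\ref{df:tput} the objective is $T_s = R_s\,p_{suc}$, and by Lemma~\ref{lm:psuc} the factor $p_{suc}=1-F_{\gamma_B}(\mu)$ depends only on $\mu$ (equivalently on $R_b$ through the optimal $\mu=2^{R_b}-1$ of Proposition~\ref{prop:rate_b}) and is therefore independent of $R_s$. Hence, for fixed $R_b$ and $\mu$, maximizing $T_s$ is equivalent to maximizing $R_s$. Because $p_{so}$ is strictly increasing in $R_s$, the feasibility condition $p_{so}(R_b,R_s)\leq\epsilon$ is an upper bound on $R_s$ that is tight precisely when $p_{so}(R_b,R_s)=\epsilon$; any slack would permit a strictly larger feasible $R_s$, contradicting optimality.

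Finally I would solve the equality $p_{so}(R_b,R_s)=\epsilon$ for $R_s$. Rearranging gives $\operatorname{P}\left(N_E,(2^{R_b-R_s}-1)/\overline{\gamma}_E\right)=1-\epsilon$; inverting the regularized incomplete gamma function in its second argument (legitimate by the strict monotonicity established above) yields $(2^{R_b-R_s}-1)/\overline{\gamma}_E = \operatorname{P}^{-1}(N_E,1-\epsilon)$, and isolating $R_s$ produces \eqref{eq:R_S}. The only delicate point is justifying that $\operatorname{P}^{-1}(N_E,\cdot)$ is well defined and single valued, which is exactly what the strict monotonicity of $\operatorname{P}(N_E,\cdot)$ guarantees; I therefore expect no genuine obstacle beyond this bookkeeping, the argument being essentially a monotonicity-plus-inversion computation.
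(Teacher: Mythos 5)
Your proposal is correct and follows essentially the same route as the paper's own proof: observe that $p_{so}$ is decreasing in $R_b$ and increasing in $R_s$, conclude the leakage constraint binds so $p_{so}(R_b,R_s)=\epsilon$, and invert $\operatorname{P}(N_E,\cdot)$ to isolate $R_s$. You simply make explicit two steps the paper leaves implicit (the positivity of the Gamma density behind the monotonicity, and the fact that $p_{suc}$ is independent of $R_s$ so any slack in the constraint would allow a larger feasible $R_s$), which is a welcome tightening but not a different argument.
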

	\begin{proof}
		Since for any $R_b>R_s$ the secrecy outage is a decreasing function of $R_b$, the maximizing throughput $R_s$ occurs when $p_{so}(R_b, R_s) = \epsilon$. In the equality $\operatorname{P} (a,z)=x$ is invertible \cite{URL:Wolfram-InvGamma}, which allow us to isolate $z = 2^{R_b-R_s}-1$ and then attain $R_s$ as in \eqref{eq:R_S}.
	\end{proof}
	
	We are about to state the simplified version of our maximization problem given Propositions \ref{prop:rate_b} and \ref{prop:r_s} discussed above. But first, let us introduce an important result that allows us to assess the trade-off between reliability and security.  
	\begin{proposition}
		\label{prop:sec_reliability}
		Given a positive secrecy rate $R_s >0$, we establish the trade-off between reliability and security as 
		\begin{align} \label{eq:tradeoff_s_e}
		\sigma < 1 - \left(1 - \exp\left(- \dfrac{ \operatorname{P}^{-1}\left(N_E, 1-\epsilon\right)  }{\rho \, \alpha } \right) \right)^{N_A N_B}, 
		\end{align} 
		where $\rho \stackrel{\small \triangle}{=} {\overline{\gamma}_B}/  {\overline{\gamma}_E}$ defines the relative gain between the average SNR of the legitimate (${\overline{\gamma}_B}$) and eavesdropper (${\overline{\gamma}_E}$) channels. 
	\end{proposition}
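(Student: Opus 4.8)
The plan is to obtain \eqref{eq:tradeoff_s_e} directly from the feasibility requirement that a strictly positive secrecy rate exists, by combining the maximal transmission rate of Proposition~\ref{prop:rate_b} with the secrecy-rate expression of Proposition~\ref{prop:r_s}. First I would observe that, by Proposition~\ref{prop:r_s}, the condition $R_s>0$ is equivalent to
\begin{align}
R_b > \log_2\left(1 + \overline{\gamma}_E \operatorname{P}^{-1}\left(N_E, 1-\epsilon\right)\right).
\end{align}
Since $R_s$ is an affine increasing function of $R_b$ (they differ by a constant), the best chance of meeting $R_s>0$ is at the largest admissible $R_b$; by Proposition~\ref{prop:rate_b}, this largest value is the right-hand side of \eqref{eq:R_B} taken at equality. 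Hence a positive secrecy rate is feasible precisely when this maximal $R_b$ exceeds the threshold above.

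Next I would substitute the equality case of \eqref{eq:R_B} into that inequality and strip the outer logarithms by monotonicity of $\log_2(1+\cdot)$, reducing the condition to
\begin{align}
\overline{\gamma}_B \, \alpha \, \log\left(\left(1 - (1-\sigma)^{\frac{1}{N_A N_B}}\right)^{-1}\right) > \overline{\gamma}_E \, \operatorname{P}^{-1}\left(N_E, 1-\epsilon\right).
\end{align}
Dividing both sides by the strictly positive quantity $\overline{\gamma}_B\,\alpha$ and recognizing $\overline{\gamma}_E/\overline{\gamma}_B = 1/\rho$ isolates the inner natural logarithm against $\operatorname{P}^{-1}\left(N_E, 1-\epsilon\right)/(\rho\,\alpha)$.

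The remaining work is a short chain of order-preserving and order-reversing operations to solve for $\sigma$: exponentiate to remove the natural logarithm, take reciprocals of both (positive) sides, which reverses the inequality, rearrange to isolate $(1-\sigma)^{1/(N_A N_B)}$, and finally raise to the power $N_A N_B$. Subtracting from one and flipping the sense of the resulting bound on $1-\sigma$ yields exactly \eqref{eq:tradeoff_s_e}.

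The main obstacle I anticipate is bookkeeping the direction of the inequality, particularly at the reciprocal step, and verifying that every operand is strictly positive so each transformation is valid; this holds because $0<\sigma<1$ and $0<\epsilon<1$ guarantee $1-(1-\sigma)^{1/(N_A N_B)}\in(0,1)$ and $\operatorname{P}^{-1}\left(N_E,1-\epsilon\right)>0$. A secondary point worth stating explicitly is the justification for testing feasibility against the \emph{maximal} $R_b$: because $R_s$ grows with $R_b$, the existence of a positive secrecy rate is governed by the equality case of \eqref{eq:R_B}, which is what makes the resulting bound on $\sigma$ tight.
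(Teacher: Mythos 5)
Your proposal is correct and follows essentially the same route as the paper's own proof: impose $R_s>0$ via Proposition~\ref{prop:r_s}, evaluate it at the maximal $R_b$ permitted by \eqref{eq:R_B}, and then isolate $\sigma$ through a chain of monotone transformations. If anything, your explicit tracking of the sign flips is more careful than the paper's intermediate step $(a)$, which drops a minus sign on the right-hand side (it should read $-\operatorname{P}^{-1}\left(N_E, 1-\epsilon\right)/(\rho\,\alpha)$) even though the final bound \eqref{eq:tradeoff_s_e} is stated correctly.
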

	\begin{proof}
		In order to achieve a positive secrecy rate $R_s>0$, we have to guarantee that $R_b > \log_2\left(1+\overline{\gamma}_E \operatorname{P}^{-1}\left( N_E \,,\, 1-\epsilon \right) \right)$. From \eqref{eq:R_B} we attain $R_b$, and then we isolate $\sigma$ as follows
		\begin{align}
		\hspace{-1ex}-\overline{\gamma}_B \alpha \log\left(1- (1-\sigma)^{\tfrac{1}{N_A \, N_B}}\right) &> \overline{\gamma}_E \operatorname{P}^{-1}\left( N_E \,,\, 1-\epsilon \right) \\
		\hspace{-1ex} \log\left(  1- (1-\sigma)^{\tfrac{1}{N_A \, N_B}} \right)  &\stackrel{(a)}{<} \dfrac{\operatorname{P}^{-1}\left( N_E \,,\, 1-\epsilon \right)}{\rho \, \alpha },
		%
		%
		\end{align}
		\begin{enumerate}[label=$(\alph*)$]
			\item since $0 \leq \epsilon \leq 1$, $N_A>0,~N_B>0$, we isolate the function of $\sigma$ in the right-side and then define $\rho$; then		
		\end{enumerate}
		since all variables are positive and grater than zero, we perform some algebraic manipulations and isolate $\sigma$ as in \eqref{eq:tradeoff_s_e}.
	\end{proof}
	\begin{corollary}
		Assuming $N_A \in \mathbb{Z}^*$ and $N_B = 1$, then $\sigma < 1 - \left(1 - \exp\left(- \operatorname{P}^{-1}\left(N_E, 1-\epsilon\right)  /\rho \right) \right)^{N_A} $, whilst for the single antenna case, \eqref{eq:R_B} reduces to $\sigma < \epsilon^{1/\rho}$, which was also attained in \cite{ART:Zhou-CL-2011}. 
	\end{corollary}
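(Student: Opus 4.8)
The plan is to obtain both claims as direct specializations of the trade-off inequality \eqref{eq:tradeoff_s_e} established in Proposition~\ref{prop:sec_reliability}; no fresh argument is required, only careful substitution and one explicit evaluation of $\operatorname{P}^{-1}$ at integer order. First I would treat the case $N_B=1$ with $N_A\in\mathbb{Z}^*$ arbitrary. Substituting $N_B=1$ into $\alpha=\Gamma(N_B+1)^{1/N_B}$ gives $\alpha=\Gamma(2)=1$, while the exponent $N_A N_B$ collapses to $N_A$. Inserting both simplifications into \eqref{eq:tradeoff_s_e} immediately yields
\begin{align}
\sigma < 1 - \left(1 - \exp\left(- \operatorname{P}^{-1}\left(N_E, 1-\epsilon\right)/\rho \right) \right)^{N_A},
\end{align}
which is the first assertion.

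Next I would push to the single-antenna case by additionally setting $N_A=1$ and $N_E=1$ in the expression just obtained (keeping $\alpha=1$ and $N_A N_B=1$ from the previous step). The only nontrivial operation is evaluating $\operatorname{P}^{-1}(1,1-\epsilon)$. Here I would invoke the closed form of the regularized lower incomplete gamma function at order one, namely $\operatorname{P}(1,z)=1-e^{-z}$, so that inverting $1-e^{-z}=1-\epsilon$ gives $\operatorname{P}^{-1}(1,1-\epsilon)=-\log\epsilon=\log(\epsilon^{-1})$. Substituting this back, the outer exponent $N_A=1$ removes the power and the inner exponential becomes $\exp(\log\epsilon/\rho)=\epsilon^{1/\rho}$, whence $1-(1-\epsilon^{1/\rho})=\epsilon^{1/\rho}$, recovering $\sigma<\epsilon^{1/\rho}$ and matching the bound reported in \cite{ART:Zhou-CL-2011}.

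The entire derivation is mechanical, and the main (minor) obstacle is precisely the evaluation of the inverse regularized incomplete gamma at $N_E=1$: one must recognize the identity $\operatorname{P}(1,z)=1-e^{-z}$ rather than leaving $\operatorname{P}^{-1}$ implicit, since otherwise the compact form $\epsilon^{1/\rho}$ never surfaces. I would therefore state that identity explicitly before the final substitution. I would also remark, as a sanity check, that $\alpha=1$ holds \emph{exactly} when $N_B=1$, which is what makes the $\alpha$ factor---visible in the general trade-off---disappear from both reduced forms, confirming internal consistency with Proposition~\ref{prop:sec_reliability}.
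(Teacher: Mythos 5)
Your proposal is correct and follows exactly the route the paper intends: the corollary is stated without proof as a direct specialization of Proposition~\ref{prop:sec_reliability}, and your substitutions ($\alpha=\Gamma(2)=1$ and $N_A N_B = N_A$ for $N_B=1$; then $\operatorname{P}^{-1}(1,1-\epsilon)=\log(\epsilon^{-1})$ via $\operatorname{P}(1,z)=1-e^{-z}$ for the single-antenna case) are all accurate. Your explicit evaluation of the inverse regularized incomplete gamma function at order one, and the observation that the underlying bound in Proposition~\ref{prop:rate_b} holds with equality when $N_B=1$, make the reduction to $\sigma<\epsilon^{1/\rho}$ fully rigorous.
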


	Fig.~\ref{fig:sigma_epsilon} illustrates the trade-off between reliability and security stated in Proposition~\ref{prop:sec_reliability}. We evaluate $\sigma$, which can be seen as a QoS/reliability indicator, as a function of $\epsilon$, which denotes how much secrecy outage the system tolerates, as well as $\rho$, which captures how good is the main channel with respect to the eavesdropper's channel. 
	Fig.~\ref{fig:sigma_epsilon} has four settings: from single to multi antenna configuration.
	
	For instance, if Alice employs only TAS ($N_A=4$), Bob and Eve are single antenna, there is a great gain in reliability with respect to the single antenna case, in fact, reliability grows from 60\% to about 97.5\% for $\epsilon=.2$ and $\rho=5$~dB. However, as $N_E$ increases the feasibility region diminishes. For instance, if Eve has one more antenna, thus $N_E=2$, $\sigma$ drops from 97.5\% to about 86\%.
	This effect can be counteracted by adding more antennas to the legitimate link, thus enhancing reliability through diversity. This case is exemplified on the rightmost plot of Fig.~\ref{fig:sigma_epsilon}, where Bob now has $N_B=2$ antennas, which renders more than 99\% of reliability for $\epsilon>0.1$ and $\rho>5$~dB.

	In this discussion we set $\epsilon = 0.2$, which is somewhat a high value for secrecy constraints. As we shall discuss in the next section, such high secrecy outage constraint may be feasible (acceptable) depending on the application. Of course, the less information lost the better, especially if the information is critical. We recall that herein we are only evaluating security at PHY layer as a way to complement some cryptographic method implemented in the higher layers of the protocol stack. Nonetheless, our results also show ways to increase security at PHY layer, thus smaller values of $\epsilon$, by increasing the number of antennas as well as guaranteeing high SNR at the main link (larger $\rho$). 
	
	Finally, one more way to increase performance of the legitimate link is to maximize the secure throughput, which is hereby our goal and we are now ready to state the simplified version of our maximization problem given Propositions \ref{prop:rate_b} to \ref{prop:sec_reliability}. 
	Therefore, the secure throughput maximization problem is rewritten as,
	\begin{align} \label{eq:opt_problem1}
	\begin{aligned}
	& \argmax_{R_b} 
	& & T_s = \left(R_b - R_e \right)\,\left( 1- \operatorname{P}\left( N_B \,,\, \frac{2^{R_b}-1}{\overline{\gamma}_B} \right)^{N_A}  \right) \\
	& \mathrm{subject~to}
	& & R_e < R_b 
	\end{aligned}
	\end{align}
	where $R_b$ is given in \eqref{eq:R_B} from Proposition~\ref{prop:rate_b}, and 
	$R_e = \log_2\left(1+\overline{\gamma}_E \operatorname{P}^{-1}\left( N_E \,,\, 1-\epsilon \right) \right)$ comes from Proposition \ref{prop:r_s}.
	\begin{proposition} 
		\label{prob:opt_rb}
		The optimal secure throughput of our proposed scheme is given as
		\begin{align}
		\label{eq:tput_opt}
		T_s^* = \left(R_b^* - R_e \right)\,\left( 1- \operatorname{P}\left( N_B \,,\, \frac{2^{R_b^*}-1}{\overline{\gamma}_B} \right)^{N_A}  \right),
		\end{align}
		where the optimal transmission rate $R_b^*$ is the solution for the following transcendental equation
		\begin{align}
		\label{eq:r_b_opt}
		1-\operatorname{P}\left( N_B \,,\, y \right)^{N_A}\!&=\! \beta y^{N_B-1} e^{-y} \operatorname{P}\left( N_B \,,\, y \right)^{N_A-1},
		\end{align}  
		where $y = \tfrac{2^{R_b}-1}{\overline{\gamma}_B}$ given the domain $R_e < R_b$ and respecting the condition \eqref{eq:tradeoff_s_e}, where 
		$\beta = \frac{\log(2) N_A (R_b-R_e) 2^{R_b}}{\Gamma(N_B) \overline{\gamma}_B}$.
	\end{proposition}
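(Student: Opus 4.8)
The plan is to treat the (single-remaining-variable) maximization as an unconstrained optimization in $R_b$. Propositions~\ref{prop:rate_b} and~\ref{prop:r_s} have already eliminated $\mu$ and $R_s$, so that $R_e = \log_2\left(1+\overline{\gamma}_E \operatorname{P}^{-1}\left(N_E, 1-\epsilon\right)\right)$ is a fixed constant and the feasibility condition $R_e < R_b$ merely guarantees that the prefactor $R_b - R_e$ is strictly positive. I would therefore write $T_s(R_b) = (R_b - R_e)\,g(R_b)$ with $g(R_b) = 1 - \operatorname{P}\left(N_B, y\right)^{N_A}$ and $y = \left(2^{R_b}-1\right)/\overline{\gamma}_B$, and locate the maximizer through the stationarity condition $\mathrm{d}T_s/\mathrm{d}R_b = 0$.

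First I would differentiate $g$ by the chain rule. The change of variable gives $\mathrm{d}y/\mathrm{d}R_b = \log(2)\,2^{R_b}/\overline{\gamma}_B$, while the derivative of the regularized lower incomplete gamma function with respect to its second argument is $\partial_y \operatorname{P}\left(N_B, y\right) = y^{N_B-1} e^{-y}/\Gamma(N_B)$. Combining these through the product and power rules yields
\begin{align}
\frac{\mathrm{d}T_s}{\mathrm{d}R_b} = g(R_b) - (R_b - R_e)\,N_A\,\operatorname{P}\left(N_B,y\right)^{N_A-1}\,\frac{y^{N_B-1}e^{-y}}{\Gamma(N_B)}\,\frac{\log(2)\,2^{R_b}}{\overline{\gamma}_B}.
\end{align}
Setting this to zero and substituting $g(R_b) = 1 - \operatorname{P}\left(N_B,y\right)^{N_A}$ reproduces exactly \eqref{eq:r_b_opt} once the collected constant is named $\beta = \log(2)\,N_A (R_b-R_e)\,2^{R_b}/\left(\Gamma(N_B)\,\overline{\gamma}_B\right)$; evaluating $T_s$ at the root $R_b^*$ then delivers \eqref{eq:tput_opt}.

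It remains to argue that this stationary point is the global maximizer over the feasible interval rather than a minimum or saddle. I would do this by a boundary-behavior argument: $T_s(R_b)$ is continuous and nonnegative, it vanishes at the left endpoint $R_b = R_e$ because the prefactor does, and as $R_b$ grows the factor $g(R_b)$ decays like $N_A\,y^{N_B-1}e^{-y}/\Gamma(N_B)$ (since $\operatorname{P}\left(N_B,y\right)\to 1$), i.e. doubly exponentially in $R_b$, which dominates the merely linear growth of $R_b - R_e$ so that $T_s \to 0$. Hence $T_s$ is strictly positive in the interior and must attain an interior maximum at a stationary point, establishing that a root of \eqref{eq:r_b_opt} furnishing the maximum exists within the domain delimited by \eqref{eq:R_B} and the trade-off condition \eqref{eq:tradeoff_s_e}.

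The main obstacle is that \eqref{eq:r_b_opt} is transcendental and admits no closed-form solution, so $R_b^*$ can only be obtained numerically (e.g. by root-finding on the difference of the two sides); this is acceptable because $\operatorname{P}$ and $\operatorname{P}^{-1}$ are readily evaluated in standard packages. A secondary subtlety is rigorously certifying \emph{uniqueness} of the interior stationary point: I expect $T_s$ to be unimodal on the feasible interval, which could be shown by verifying that $\mathrm{d}T_s/\mathrm{d}R_b$ changes sign exactly once, but the boundary argument above already secures existence of the claimed maximizer without requiring the full concavity analysis.
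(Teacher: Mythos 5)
Your derivation of the stationarity condition is exactly the paper's: both proofs write $T_s=(R_b-R_e)\,g(R_b)$, differentiate via $\partial_y \operatorname{P}(N_B,y)=y^{N_B-1}e^{-y}/\Gamma(N_B)$ and $\mathrm{d}y/\mathrm{d}R_b=\log(2)\,2^{R_b}/\overline{\gamma}_B$, and collect the constant into $\beta$ to obtain \eqref{eq:r_b_opt}; your algebra checks out. Where you genuinely diverge is in certifying that the stationary point is the maximizer. The paper simply asserts concavity, i.e.\ $\partial^2 T_s/\partial R_b^2<0$ on the domain $R_e<R_b$, which (if true) gives existence and uniqueness of the interior optimum in one stroke, but it never exhibits or verifies the second derivative. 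You instead use a boundary argument: $T_s$ vanishes at $R_b=R_e$, decays to zero as $R_b$ grows because $1-\operatorname{P}(N_B,y)^{N_A}\sim N_A\,y^{N_B-1}e^{-y}/\Gamma(N_B)$ kills the linear prefactor, and is positive in between, so an interior maximum exists and must be a root of \eqref{eq:r_b_opt}. This is logically weaker---it does not rule out multiple roots, which you honestly flag---but every step of it is actually argued, whereas the paper's concavity claim is stated without proof; in that sense your route is the more defensible one, at the cost of leaving unimodality open. One caveat shared by both proofs: the feasible set is really the interval $R_e<R_b\leq R_b^{\max}$ with $R_b^{\max}$ given by \eqref{eq:R_B}, and condition \eqref{eq:tradeoff_s_e} only guarantees this interval is nonempty; strictly speaking one must also check that the root of \eqref{eq:r_b_opt} does not exceed $R_b^{\max}$, otherwise the constrained optimum sits at the boundary $R_b=R_b^{\max}$ rather than at the stationary point. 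Neither you nor the paper addresses this, so it is not a gap specific to your proposal.
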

	\begin{proof}
		The function $T_s$ is continuous and concave in the domain $R_e < R_b$ (with $N_A, N_B \in \mathbb{Z}^*$ and $\overline{\gamma}_B>0$), where $R_b$ is given in \eqref{eq:R_B}, since its second derivative with respect to $R_b$ is negative, thus ${\partial^2 T_s}/{\partial R_b^2} < 0$, therefore $R_b^*$ is attained by solving the first derivative of $T_s$ with respect to $R_b$ and equating to zero, ${\partial T_s}/{\partial R_b} = 0$, which after some algebraic manipulations yields \eqref{eq:r_b_opt}.
		
		Unfortunately, \eqref{eq:r_b_opt} does not have a closed-form expression, though it is noteworthy that \eqref{eq:r_b_opt} can be easily evaluated numerically using mathematical frameworks such as Mathematica and SciPy \cite{URL:sympy-nsolve}. For the single antenna case an closed-from expression for $R_b^{*}$ can be attained as in \cite{ART:Zhou-CL-2011}. 
		%
	\end{proof}
	\begin{figure}[!b]
		\vspace{-2ex}
		\centering
		\includegraphics[width=\columnwidth]{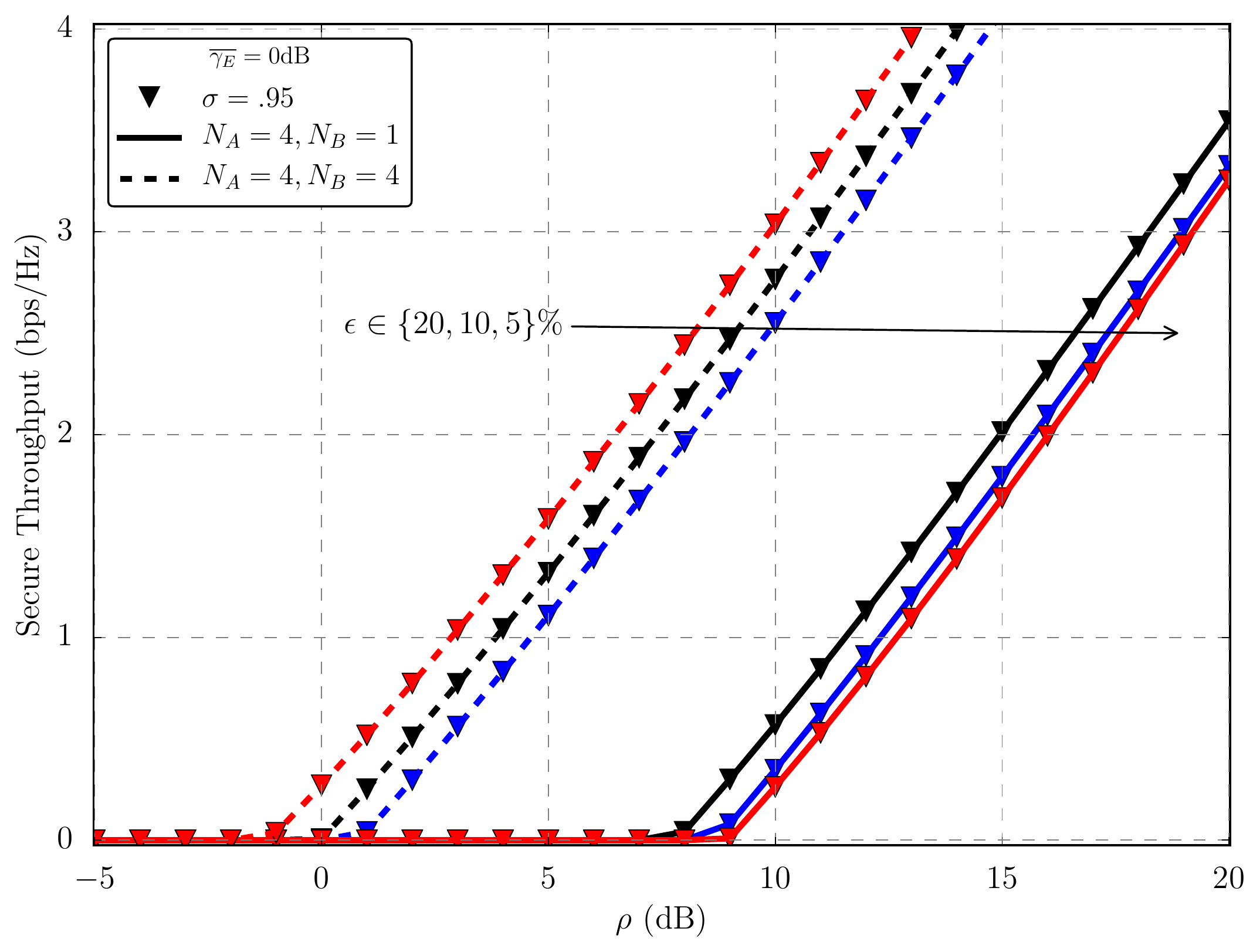}
		\caption{Secure throughput as a function of the SNR of the legitimate link $\overline{\gamma}_B$, assuming $\overline{\gamma}_E=0$~dB, $\sigma=95\%$ and distinct antenna configurations and secrecy outage thresholds.}
		\label{fig:tput_gamma}
	\end{figure}
	
	\begin{figure*}[!t]
		\centering
		\includegraphics[width=\columnwidth]{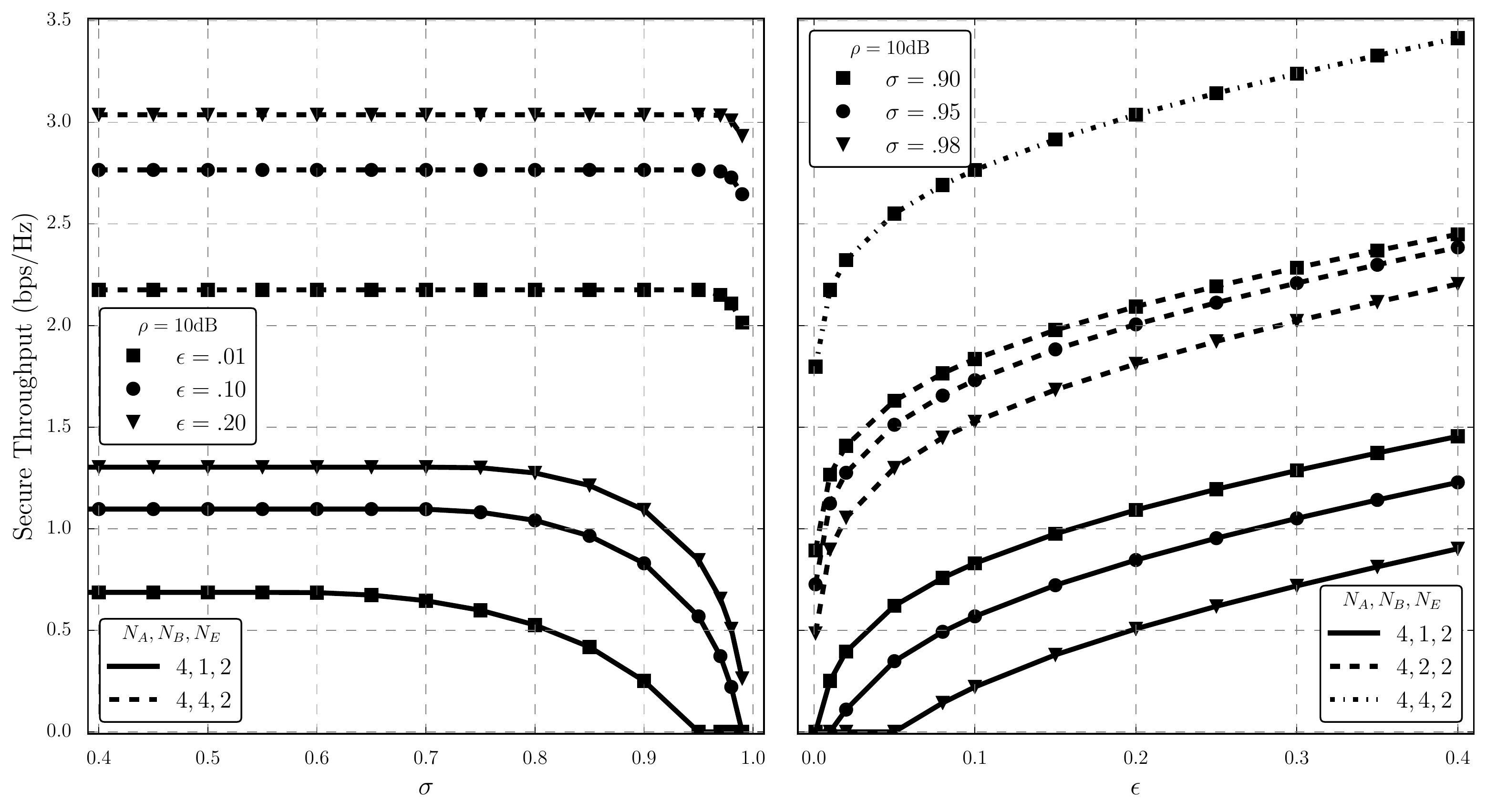}
		\caption{Secure throughput as a function of the $\sigma$ on the left and $\epsilon$ on the right, for fixed $\overline{\gamma}_B=10$ dB, assuming $\overline{\gamma}_E=0$~dB, and distinct antenna configurations.}
		\label{fig:tput_s_e_g}
	\end{figure*}
	
	We illustrate Proposition~\ref{prob:opt_rb} with the following numerical example depicted in Fig.~\ref{fig:tput_gamma}, where secure throughput is evaluated as a function of the SNR of the legitimate link $\overline{\gamma}_B$, assuming $\overline{\gamma}_E=0$~dB (thus $\rho = \overline{\gamma}_B$), $\sigma=95\%$ for $N_A=4$, $N_B\in\{1,4\}$ and $N_E=2$. As expected, by relaxing the constraint on the secrecy outage $\epsilon$, larger throughput can achieved. 
	Similar effect can be also observed if we relax the QoS constraint ($\sigma$). An significant improvement can be observed as the number of antennas at the legitimate channel grows. 

	Fig.~\ref{fig:tput_s_e_g} further shows the secure throughput as a function of the legitimate link QoS ($\sigma$) on the left, and as a function of the secrecy outage threshold ($\epsilon$) on the right. We assume $\rho=10$~dB and distinct antenna configurations. Again, the higher the number of antennas at the legitimate link, greater throughput is achieved, for instance by increasing by one the number of antennas at Bob the throughput more than doubles ($\times 2.33$) for $\epsilon=.10$ and $\sigma=.90$. 
	
	Such throughput enhancement can be also observed by relaxing the constraint on the secrecy outage. Interestingly, some performance floors are achieved with respect to our constraints.
	For instance, from Fig.~\ref{fig:tput_s_e_g} on the right, the QoS constraint can be relaxed from $98$\% to $90$\% and yet the same throughput is achieved even  having more antennas available at the legitimate channel, which is consequence of Proposition~\ref{prop:sec_reliability}.

	All in all, our results show the trade-off between security and reliability, and thus depending on the application more relaxed secrecy constraints can ensure great reliability. Likewise, we are also able to trade reliability for secrecy, which in this case goes against current standards for smart grids that requires at least $98$\% reliability in the communication link \cite{ART:Kuzlu-CN-2014}. Nonetheless, \cite{ART:Kuzlu-CN-2014} do not account for security and herein we show that such constraint can be achieved and even higher security can be guaranteed if the reliability constraint is relaxed. 

	In the following section, we illustrate our framework with a practical smart grid example.
	As we will see later, the information to be transmitted is average power demand of a household where the aggregator and the eavesdropper need to reconstruct the load profile curve. 
	
	\section{Secure Reconstruction of the Average Power Demand Curve} \label{sc:power}
	In the previous section, we commented that if we allow a larger secrecy outage, higher throughput in the legitimate link can be attained. Consequently, larger secrecy outage means that the Eve will decode more information and become more knowledgeable about our system. However, as we discuss next, even in this case Eve will not be able to acquire enough information to reconstruct the average power demand curve completely. 
	
	\begin{figure*}[!t]
		\centering
		\includegraphics[width=\columnwidth]{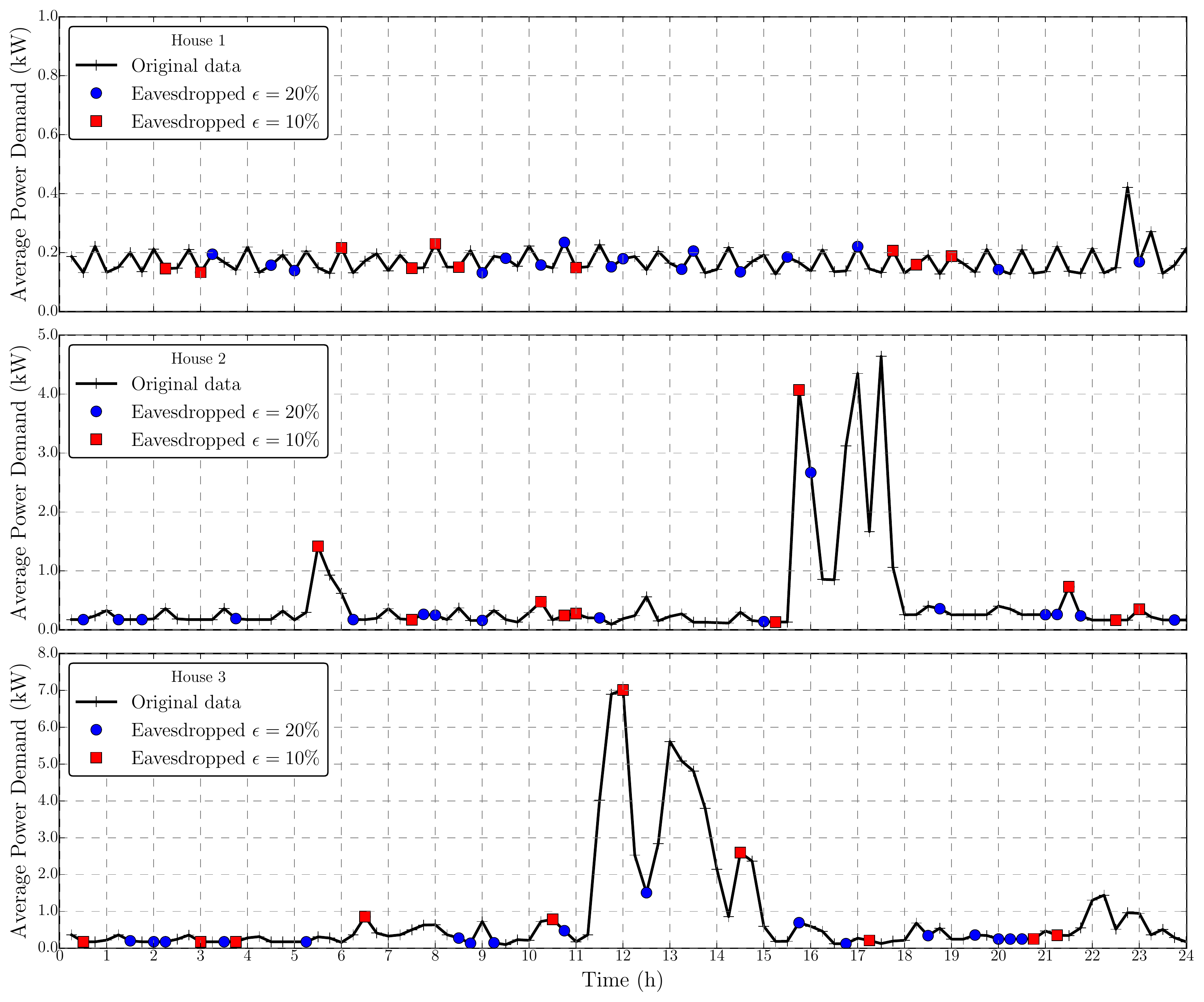}
		\caption{Examples of average power demand curves for three distinct houses from REDD database over 24 hours with transmissions every 15 minutes. House~1 presents a low power demand (few appliances (e.g. fridge) are on), House~2 has higher average and presents peak demand, which is also observed in House~3. The cases where the eavesdropper acquires 10\% and 20\% of the packages if also depicted.}
		\label{fig:houses}
		\vspace{1ex}
	\end{figure*}

	Let us first exemplify how Alice transmits its average power demand to Bob. Then, let $x[n]$ denote the average power demand, where $n=1,...,N$ is the index and $N$ total number of samples. 
	Transmissions are schedule in fixed period of time $\tau$, herein we assume 15-minute based sampling and a transmission and thus $\tau=0.25$ hour, which renders $N=96$ samples per day. 
	If an outage occurs, Bob reconstructs the signal via linear interpolation between two adjacent points. Thus, Bob will interpolate the missing value(s) using the latest two received samples. Similarly, we assume that Eve also attempts to estimate and reconstruct the signal via linear interpolation. 
	For example, consider the transmitted sequence: $x[k-2], x[k-1], x[k]$  with $k=2,...,N$, and let $y[k]$ denote the received signal, $k=1,...,N$. Then, if the samples $x[k-2]$ and $x[k]$ are successfully received but $x[k-1]$ is not, the reconstruction is based on the linear interpolation and the estimated point is denoted by $y[k-1] = (y[k] + y[k-2])/2$. 
	
	In order to perform our analysis we resort to ``The Reference Energy Disaggregation Data Set'' (REDD) database \cite{PROC:Kolter-2011,URL:REDD} to build the signal $x[n]$, which is a 15-minute average power demand over a timespan of $24$ hours\footnote{The REDD database is composed of 6 households, monitored during several days with a frequency of $1$Hz. After processing the data (namely, the sum of the power of phases A and B), we identified $53$ slices of $24$-hour periods (all aligned in time among themselves) which provide us a full set of average power measures. In other words, each of these slices can be seen as a single household and then these measures are used to simulate the daily transmissions}.
	We assume that both Bob and Eve use linear interpolation to reconstruct the power demand curve. In order to estimate the error due to the signal reconstruction we adopt the root mean square deviation (RMSD), which is calculated based on the received (and estimated when needed) samples and the actual data, and is given as
	\begin{eqnarray}\label{eq:rmsd}
	\mathrm{RMSD} = \sqrt{\frac{1}{N}\sum_{k=1}^{N}(y[k]-x[k])^2}.
	\end{eqnarray}
	In order to facilitate the comparison among household power demand profiles, we choose to normalize the RMSD (NRMSD) by the average of the transmitted signal power, thus $\mathrm{NRMSD} = {\mathrm{RMSD}}/{\overline{y}}$, which is commonly known as the coefficient of variation of the RMSD.
	
	Further, from the database selected $3$ households that provide a significant representation of the dataset, namely House~$1$, House~$2$ and House~$3$, since each of these households presents a distinct average power demand profile. Fig.~\ref{fig:houses} exemplifies the average power demand of these three distinct houses over 24 hours with transmissions every 15 minutes. 
	For instance, House~1 presents a low power demand profile, which means that few appliances are on (e.g. fridge, lights). 
	House~2 has higher average compared to House~1 and presents peak demand, which is also observed in House~3. For instance, from the data of House~2 we can infer that there is more activity in the house in early morning (e.g. showering, preparing breakfast) and at the end of the day, around the time where people are having dinner, doing the house chores, and watching TV. House~3 has similar patterns, but shifted in time and concentrated during the afternoon. 
	Fig.~\ref{fig:houses} also assumes that an eavesdropper is able to decode  10\% (red square) or 20\% (blue circle) of the packages, due to the secrecy outages occurred in this period. Note that if Eve can obtain 20\% of the packets, she is still not able to reconstruct the power demand curve and then infer the presence and activities within a given house. 
	On the other hand, with $\sigma \geq 90\%$ few points are lost such that the aggregator can estimate them through linear interpolation without larger estimations errors. 
	It is noteworthy that a malicious eavesdropper may acquire this information and perform a series of cyber (and even physical) attacks on a neighborhood by exploiting the smart meters transmissions. Given enough intercepted points, it is  possible to infer personal information and inhabitants behavior and activities (for instance, presence and absence hours, sleeping hours) from the power demand curve \cite{ART:Fang-CST2012}. 
	Thus the necessity of protecting the transmission against eavesdropping and any other type information leakage.  
	
		\begin{figure}[!t]
			\centering
			\includegraphics[width=\columnwidth]{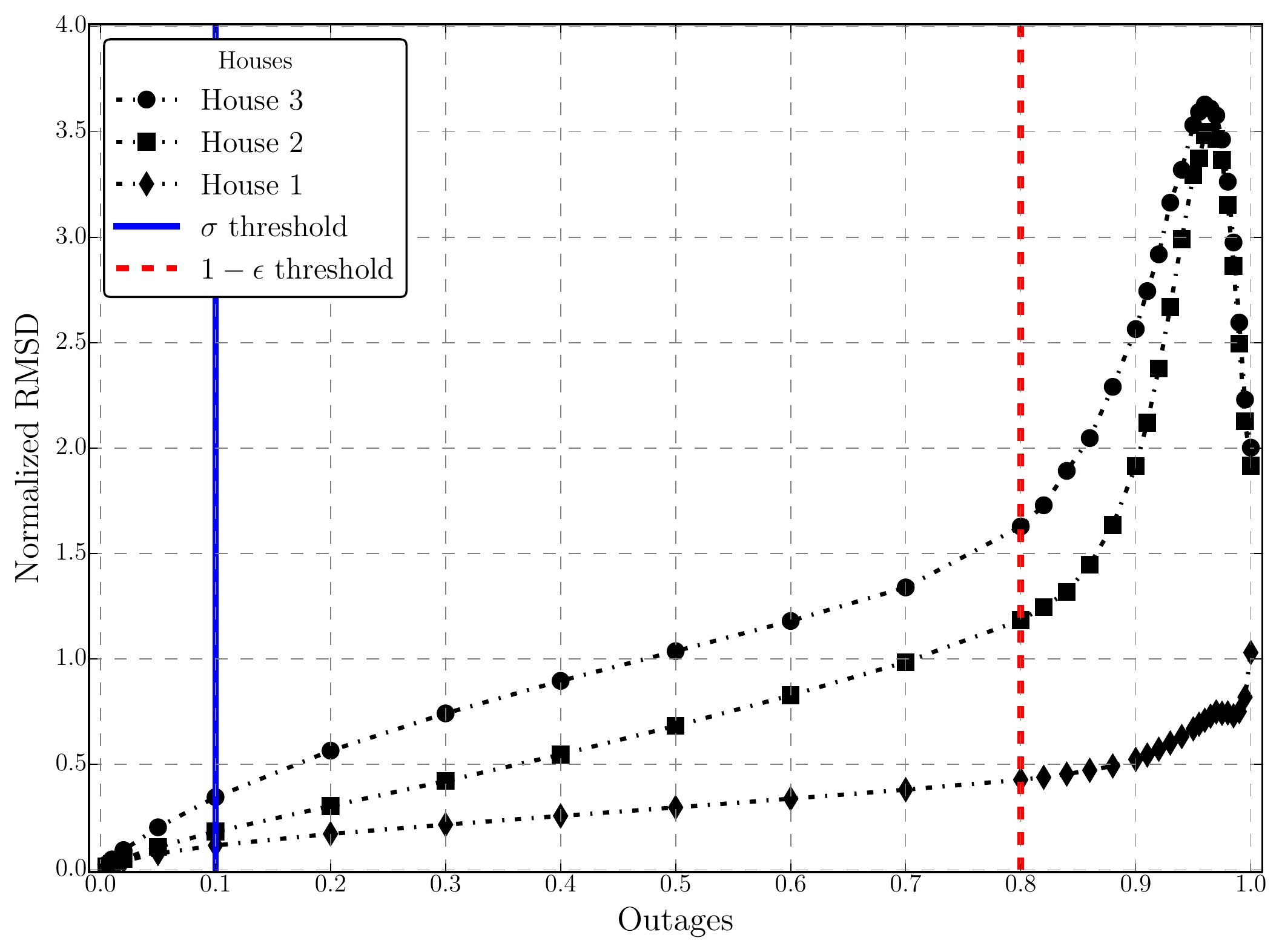}
			\caption{Normalized RMSD as a function of the outage, which encompass reliability and secrecy outages. Note that $\sigma$ (blue line on the left) and $1-\epsilon$ (dashed red line on the right) thresholds delimit the regions that guarantee secrecy and reliability.}
			\label{fig:error}
		\end{figure}
	Fig.~\ref{fig:error} depicts the normalized RMSD as a function of the outages, which represents the outage either at the legitimate link or information leakage to Eve. In terms of reliability, the region of interest lies on the left-hand side of the plot and it is delimited by the $\sigma$ threshold (blue line). We assume Monte Carlo simulations with $10^5$ repetitions for each house (each house has $N=96$ samples). After the linear interpolation used to estimate the missing points, we calculate the RMSD and then normalize by the average power. 
	As we can observe from the figure, if during a day the legitimate link perceives outages of up to $\sigma=10$\%, the demand power curve can be reconstructed with low error. For instance, for $\sigma=10$\%, the normalized RMSD for each house is respectively $0.12$, $0.18$ and $0.35$, and these values can be seen as coefficient of variation indicating that there is a low variance in the reconstruction of the power demand curve.
	Eve, on the other hand, has a greater outage, which increases the error in the signal reconstruction leading to high coefficient of variation. For example, secrecy outages of at most $\epsilon \leq 20$\% (which means that, on average,  Eve intercepts up to $20$\% of the transmissions) correspond to outages greater than $80$\% and therefore a higher coefficient of variation of the RMSD as indicated by red line ($1-\epsilon$ threshold) on the rightmost side of Fig.~\ref{fig:error}. Notice that Eve acquires few points, and thus her estimation and reconstruction is very poor, as a result of Alices strategy when setting the secrecy outage threshold and optimizing the secure throughput. 
	It is worth mentioning that the reduction on the normalized RMSD on the extreme right (more than $97$\% of outages) occurs because the number of points available at Eve is small. In this cases the RMSD is calculated with respect to zero or to a line that lies close to average of the actual signal, which decreases the RMSD. To illustrate this point assume that Eve only attained a point around hours $3$ and $21$ from House $3$, as depicted in Fig.~\ref{fig:houses}. Based only on that, Eve estimates that all points lie within this line, and as we can see from Fig.~\ref{fig:houses} the majority of the points is closer to the marginal power demand rather than to the peak consumption hours.

	\section{Discussions} \label{sc:discussions}
	
	We proposed a physical layer security scheme that enhances the communication link between a pair of legitimate nodes in the presence of an eavesdropper. In our scenario, an eavesdropper may attempt to acquire information from the smart meters from a given neighborhood. However, the transmitter does not have any CSI from the Eve, but is still able to optimize its transmission rate such that secure throughput can be achieved. 
	Notice that the results attained herein are not limited to smart grid applications, thus we provide an general framework that can be extended to other contexts. 
	In the previous sections we have discussed how we can improve the secure throughput, reliability and security of the system and we connect our analytical results with actual data and signal reconstruction. Herein, we discuss some pros and cons of this proposed method and future work. 
	
	\subsection{Reliability and secrecy outage constraints}
	We set the reliability constraint ($\sigma$) to ensure a minimum robustness for the legitimate link. Likewise, the secrecy outage constraint ($\epsilon$) envisages a maximum information leakage to the eavesdropper. 
	Then, we present the trade-off between security and reliability, in which we can choose to sacrifice robustness of the legitimate link for security, or relax the secrecy constraint in order to achieve higher reliability. 
	Current standards foresee a reliability greater than $98\%$ for the communication link in the smart grid (smart meter--aggregator) \cite{ART:Kuzlu-CN-2014}. As discussed above, such constraint is stringent especially if we want add security at physical layer while enhancing the performance of the system. 
	
	Then, \textit{how big would be the sacrifice of robustness of the legitimate link for security?} 
	A more appropriate answer can be given only if we know the information that is sent to the
	aggregator, so that different information flows have distinct priorities and allocation. In our case, we assume that the information sent is the average power demand, then we show that the signal reconstruction is possible even with relatively loose outage constraints (e.g. $\sigma\geq 90\%$), while Eve cannot attain much information at secrecy outages of $\epsilon \leq 10\%$. 
	Notice that this result is dependent on the inherent characteristic of the transmitted signal, for instance, as we can observe from Fig.~\ref{fig:houses} the average power demand presents overall low variation around the average (see House~$1$), except for relatively short periods of peak consumption as in House~$2$ and House~$3$.
	Therefore, design the whole system for higher outage probabilities in the legitimate link as well as high secrecy outages may not be prudent for other kind of signals or if the aggregator should provide
	some kind of feedback to the smart meter (e.g. change the power demand behavior, as in strategies
	of demand-side management \cite{ART:Strbac-EP2008}). Thus, the necessity of classifying the information flows from smart meters to aggregator with respect to signal characteristics, as well as reliability and security. 
	
	\subsection{Enhancing robustness of the legitimate link }
	
	Herein, we assume that transmissions are scheduled every $15$ minutes, and if a package does not meet our outage constraint, it is considered lost and then Bob will estimate the power demand via interpolation. Alternatively, as future work, another scenario may include Hybrid Automatic Repeat Request (HARQ) strategies in order to enhance the communication link, which reduces outage events while enhancing throughput as show in \cite{ART:Nardelli-TWC2012, ART:Alves-WN2012}. Cooperation may also be an extension to enhance secure throughput and reliability \cite{ART:Alves-SPL-2015, ART:Brante-TC-2015}. 
	Thus, these more advanced communication techniques combined may be used to enhance secure throughput and would be an interesting next step for the present work.
	
	\subsection{Signal processing and transmission}
	Fig. \ref{fig:houses} exemplifies a $15$ minute sampling interval of the power demand of a household.
	Due to the characteristics of this signal, a time-based sampling might not be the most effective way to collect and sent data to the aggregator. Then, as pointed out in \cite{ART:Simonov-SJ2015} event based sampling may be more suitable, once it reduces the amount of redundant information transmitted. 
	However, such approach requires a more robust communication link, due to the lack of redundant data,  and therefore the loss of any sample will have a more dramatic effect on the signal reconstruction. At the same time, this scheme is more secure once Eve acquires even less information. 
	As pointed out in \cite{ART:Nardelli-AN2015} transmission strategies and outage constraints should be evaluated in combination with the sampling procedure, due to the amount of redundant information generated in each case. 
	
	It is worth noting that even though we analyze the situation for a 24-hour period and a simple interpolation technique. Due to the daily habits of the dwellers, it would be possible to recover the usage profile (or activities) by superimposing the missing data from one day with data from similar days. However, this would require more sophisticated signal processing at Eve as well as large time window that could range from days to months depending on the settings of the network. Such process is also hampered by slight variations in the habits and activities of the inhabitants if we consider a sufficiently high outage for Eve.

	\section{Conclusions and final remarks} \label{sc:conc}
	
	Herein we assess the secure communication link between smart-meters and an aggregator in the presence of a potential eavesdropper (Eve). We assume MIMOME wiretap channel, where Alice employs transmit antenna selection and has no channel state information of Bob and Eve. Therefore, we resort to fixed Wyner codes and then optimize Alice's transmission rate so that secure throughput can be guaranteed subject to quality of service and secrecy outage constraints. 
	We assess in closed-form both secrecy outage and successful transmission probabilities, and then maximize the secure throughput and establish the secrecy-reliability trade-off. Our numerical results illustrate the effect of this trade-off on the secure throughput as well as number of antennas at Alice, Bob and Eve. Our results show that a small sacrifice in reliability allows secrecy enhancement. Even though Eve may acquire some information, we show that it will not be enough to reconstruct the average power demand curve of a household.
	
	We plan to study in future works how the secure throughput will be affected under different sampling strategies. In this way, we plan to build a joint sampling-transmission technique that can improve the system efficiency, as discussed in Section~\ref{sc:discussions}.

	
\end{document}